\documentclass[journal]{IEEEtran}
\usepackage{stmaryrd}
\usepackage{amsfonts,array}
\usepackage[dvipsnames,usenames]{color}
\usepackage{graphicx,times,psfig,amsmath} 
\usepackage{graphics,color}
\usepackage{graphicx}
\usepackage{amssymb,amsmath,amsfonts} 
\usepackage{amsmath,mathrsfs,bm,url,times}
\usepackage{latexsym}
\usepackage{subfigure}
\usepackage{algorithm}
\usepackage{algorithmicx}
\usepackage{algpseudocode}
\usepackage{epsfig,url}
\usepackage{epstopdf}
\usepackage{multirow}

\def\QED{\hspace*{\fill}~\QEDclosed}
\newtheorem{proposition}{Proposition}
\newtheorem{theorem}{Theorem}
\newtheorem{lemma}{Lemma}

\newtheorem{definition}{Definition}

\newtheorem{remark}{Remark}


\DeclareMathOperator*{\argmax}{argmax}
\newcommand{\R}{{\mathbb R}}
\newcommand{\onetom}{1,2,\cdots,m}
\newcommand{\oneton}{1,\cdots,n}

\newcommand{\zerotoinfty}{0,1,2,\cdots}

\newcommand{\measure}{\text{\it m}}
\newcommand{\ignore}[1]{}



\begin{document}
\title{Global Convergence of Analytic Neural Networks with Event-triggered Synaptic Feedbacks}
\author{Wenlian Lu~\IEEEmembership{Member,~IEEE}, Ren Zheng, Xinlei Yi ,  Tianping Chen~\IEEEmembership{Senior~Member,~IEEE}\thanks{This manuscript is submitted to the Special Issue on Neurodynamic Systems for Optimization and Applications.}\thanks{R. Zheng, X. Yi, W. L. Lu and T. P. Chen are with the School of Mathematical Sciences, Fudan University, China; W. L. Lu is also with the Centre for Computational Systems Biology and School of Mathematical Sciences, Fudan University, and Department of Computer Science, The University of Warwick, Coventry, United Kingdom; T. P. Chen is also with the School of Computer Science, Fudan University, China (email: \{yix11, wenlian, tchen\}@fudan.edu.cn).} \thanks{This work is jointly supported by the Marie Curie International Incoming Fellowship from the European
Commission (FP7-PEOPLE-2011-IIF-302421), the National Natural Sciences Foundation
of China (Nos. 61273211 and 61273309), the Program for New Century Excellent Talents in University (NCET-13-0139), and the Programme of Introducing Talents of Discipline to Universities (B08018).}}



\date{}

\maketitle

\begin{abstract}
In this paper, we investigate convergence of a class of analytic neural
networks with event-triggered rule. This model is general and include Hopfield neural network as a special case. The event-trigger rule efficiently reduces the frequency of information transmission between synapses of the neurons. The synaptic feedback of each neuron keeps a constant value based on the
outputs of its neighbours at its latest triggering time but changes until the next
triggering time of this neuron that is determined by certain criterion via its
neighborhood information. It is proved that the analytic neural
network is completely stable under this event-triggered rule. The main
technique of proof is the ${\L}$ojasiewicz inequality to prove the finiteness of trajectory length. The realization of this event-triggered rule
is verified by the exclusion of Zeno behaviors. Numerical examples are
provided to illustrate the theoretical results and present the optimisation
capability of the network dynamics.
\end{abstract}
\begin{IEEEkeywords}
Analytic neural network, complete stability, distributed event-triggered rule, Self-triggered rule
\end{IEEEkeywords}

\section{Introduction}

\PARstart{T}{his} paper focuses on the following dynamical system
\begin{align}
\begin{cases}
\dot{x}=-D x-\nabla f(y)+\theta\\[3pt]
y=g(\Lambda x),\\[2pt]
\end{cases}\label{mg0.1}
\end{align}
where $x\in\R^n$ is the state vector, $D$ is a constant self-inhibition matrix, the cost function $f(y)$ is an {\em analytic} function and $\theta\in\R^n$ is a constant input vector. $y=g(\Lambda x)$ is the output vector with the sigmoid function $g(\cdot)$ as nonlinear activation function.

Eq. \eqref{mg0.1} was firstly proposed in \cite{Mfa} and is a general model of neural-network system arising in recent years. For example, the well-known Hopfield neural network \cite{Jjh1,Jjh}, whose continuous-time version can be formulated as
\begin{align}
\begin{cases}
C_i\dot{x}_{i}=-\dfrac{x_{i}}{R_{i}}+\sum\limits_{j=1}^{n}\omega_{ij}y_{j}+\theta_{i}\\[3pt]
y_{i}=g_{i}(\lambda_{i}x_{i}),\\[2pt]
\end{cases}\label{mg0.2}
\end{align}
for $i=\oneton$, where $x_{i}$ stands for the state of neuron $i$ and each activation function $g_{i}(\cdot)$ is  sigmoid. With the symmetric weight condition ($\omega_{ij}=\omega_{ji}$ for all $i,j=1,\cdots,n$), Eq. (\ref{mg0.2}) can be formulated as Eq. (\ref{mg0.1}) with $f(y)=-\frac{1}{2}\sum_{i,j=1}^{m}\omega_{ij}y_iy_j$. This model has a great variety of applications. It can be used to search for local minima of the quadratic objective function of $f(y)$ over the discrete set $\{0,1\}^{n}$ \cite{Mvi}-\cite{Wll}, for example, the traveling-sales problem \cite{Jma}.
One step further, this model was extended for a multi-linear cost function $E(y)=\sum_{i_1,\cdots,i_p}a_{i_1,\cdots,i_p}y_{i_1},\cdots,y_{i_p}$ \cite{Mvi}. This model can be also regarded as a special form of (\ref{mg0.1}) with $f(y)=E(y)$ and was proved that this model can minimize $E(y)$ over the discrete set $\{0,1\}^{n}$ \cite{Mvi}.

In application for optimisation, analysis of convergence dynamics is fundamental, which has attracted many interests from different fields. See \cite{Mac}-\cite{Tpcl} and the references therein. The linearization technique and the
classical LaSalle approach for proving stability \cite{Mac,Mvi} could be invalid when the system had non-isolated equilibrium
points (e.g., a manifold of equilibria) \cite{Mfa}. A new concept
"absolute stability" was proposed in \cite{Mfa1,Mfa,Mfa2} to show that
each trajectory of the neural network converges to certain equilibrium for any parameters and
activation functions satisfying certain conditions by proving the finiteness of the
trajectory length and the celebrated ${\L}$ojasiewicz inequality
\cite{Slo}-\cite{Slo1}. This idea can also be seen in an earlier paper \cite{Tpc}.

However, in the model (\ref{mg0.1}), the synaptic feedback of each neuron is continuous bsed on the output states of its neighbours, which is costly in practice for a network of a large number of neurons. In recent years, with the development of sensing, communications, and computing equipment, event-triggered control
\cite{Pta}-\cite{Yfg} and self-triggered control \cite{Aapt}-\cite{Aap}
have been proposed and proved effective in reducing the frequency of synaptic information exchange significantly.
In this paper, we investigate global convergence of analytic
neural networks with event-triggered synaptic feedbacks. Here, we present
event-triggered rules to reduce the frequency of receiving synaptic
feedbacks. At each neuron, the synaptic feedback is a constant that is determined by the outputs of
its neighbours at its latest triggering time and changes at the next triggering time
of this neuron that is triggered by a criterion via its neighborhood
information as well. We prove that the analytic neural networks are
convergent (see {\it Definition \ref{convergence}}) under these event-triggered rules by the ${\L}$ojasiewicz inequality. In addition, we further prove that the event-triggered rule is viable, owing to the
exclusion of Zeno behaviors. These event-triggered rules are distributed
(each neuron only needs the information of its neighbours and itself),
asynchronous, (all the neurons are not required to be triggered in a
synchronous way), and independent of each other (triggering of an neuron
will not affect or be affected by triggering of other neurons). It should be
highlighted that our results can be easily extended to a large class of
neural networks. For example, the standard cellular networks
\cite{Loc}-\cite{Loc1}.

The paper is organized as follows: in Section \ref{sec2}, the preliminaries are given; in Section \ref{sec3}, the convergence and the Zeno behaviours of analytic neural networks with the triggering rules : distributed event-triggered rule is proved in Section \ref{sec3}; in Section \ref{sec5}, examples with numerical simulation are provided to show the effectiveness of the
theoretical results and illustrate its application; the paper is concluded in Section \ref{sec6}.

\noindent {\bf Notions}: $\R^{n}$ denotes $n$-dimensional real space. $\|\cdot\|$ represents the Euclidean norm for vectors or the induced 2-norm for matrices. $B_r(x_0)=\{x\in\R^{n}:~\|x-x_0\|<r\}$ stands for an $n$-dimensional ball
with center $x_0\in\R^{n}$ and radius $r>0$. For a function
$F(x):~\R^{n}\rightarrow\R$, $\nabla F(x)$ is its gradient. For a set $Q\subseteq \R^{n}$ and a point $x_0\in\R^{n}$,
$\text{\it dist\,}(x_0,Q)=\inf_{y\in Q}\|x_0-y\|$ indicates the distance from $x_0$ to
$Q$.

\section{Preliminaries and problem formulation}\label{sec2}

In this section, we firstly provide some definitions and results on algebraic graph theory, which will be used later. (see the textbooks \cite{Rdi}, \cite{RaH} for details)

For a directed graph $\mathcal G=(\mathcal V,\mathcal E,\mathcal A)$ of $n$ neurons (or nodes). where $\mathcal V=\{v_{1},\cdots,v_{n}\}$ is the set of neurons, $\mathcal E\subseteq\mathcal V\times\mathcal V$ is the set of the links (or edges), and $\mathcal A=[a_{ij}]_{n\times n}$ with nonnegative adjacency elements $a_{ij}\in\{0,1\}$ is the adjacency matrix, a link of $\mathcal G$ is denoted by $e(i,j)=(v_{i},v_{j})\in\mathcal E$ if there is a directed link from neuron $v_{j}$ to $v_{i}$ and the adjacency elements associated with the links of the graph
are positive, (i.e., $e(i,j)\in\mathcal E$ if and only if $a_{ij}>0$). We take $a_{ii}=0$ for all $i=\oneton$. Moreover, the in-neighbours and out-neighbours set of neuron $v_{i}$ are defined as $N_{i}^{\text{in}}=\{v_{j}\in \mathcal V:a_{ij}>0\}$ and $N_{i}^{\text{out}}=\{v_{j}\in \mathcal V:a_{ji}>0\}$. The neighbours of the neuron $v_{i}$ denoted by $N_{i}$ is the union of in-neighbours $N_{i}^{\text{in}}$ and out-neighbours $N_{i}^{\text{out}}$, that is, $N_{i}=N_{i}^{\text{in}}\bigcup N_{i}^{\text{out}}$.

Consider the discrete-time synaptic feedback, Eq. \eqref{mg0.1} can be reformulated as follows
\begin{align*}
\begin{cases}
\dot{x}_{i}(t)=-d_{i} x_{i}(t)-\Big[\nabla f\big(y(t^i_{k_i(t)})\big)\Big]_{i}+\theta_{i}\\[5pt]
y_{i}(t)=g_{i}\big(\lambda_ix_i(t)\big)\\[3pt]
\end{cases}
\end{align*}
for $i=\oneton$ and $k_i(t)=\zerotoinfty$, where $x_{i}\in\R$ , $d_{i}>0$ and $\theta_{i}\in\R$.  $f(y):\R^{n}\to\R$ is an {\em analytic} cost function function and $y_{i}=g_{i}(\lambda_{i}x_{i})$ is the output vector with a scaling parameter $\lambda_{i}>0$ and the sigmoid functions $g_{i}(\cdot)$ as nonlinear activation functions. In this paper, we take
\begin{align*}
g_{i}(x)=\frac{1}{1+e^{-x}}
\end{align*}
and the gradient of the activation function $g(\cdot)$ at $x\in\R^n$ can be written as $\partial g(x)=\text{\it diag}\{g'_{1}(x_{1}),\cdots,g'_{n}(x_{n})\}$. The strict increasing triggering event time sequence $\{t_{k}^{i}\}_{k=1}^{+\infty}$ (to be defined) are
neuron-wise and $t_{1}^{i}=0$, for all $i=\oneton$. At each $t$, each neuron $i$ changes the information from its neighbours with respect to an identical time point $t_{k_{i}(t)}^{i}$ with $k_{i}(t)=\argmax_{k'}\{t^{i}_{k'}\leqslant t\}$. Throughout the paper, We may simplify the notation $t_{k_{i}(t)}^{i}$ as $t_{k}^{i}$ unless there is a potential ambiguity. Thus, we have
\begin{align}
\begin{cases}
\dot{x}_i(t)=-d_{i} x_{i}(t)-\Big[\nabla f\big(y(t^i_{k})\big)\Big]_{i}+\theta_{i}\\[5pt]
y_{i}(t)=g_{i}\big(\lambda_ix_i(t)\big)\\[3pt]
\end{cases}\label{mg}
\end{align}
for $i=\oneton$ and $k=\zerotoinfty$. Let $F(x)=[F_{1}(x),\cdots,F_{n}(x)]^{\top}$ be the vector at the right-hand side of \eqref{mg}, where
\begin{align*}
F_{i}(x)=-d_{i} x_{i}-\Big[\nabla f\big(y(t^i_{k})\big)\Big]_{i}+\theta_{i}.
\end{align*}
Note that when we consider the trajectories, the right-hand side of \eqref{mg} can be written as $F(x(t))=[F_{1}(x(t)),\cdots,F_{n}(x(t))]^{\top}$.
Denote the set of equilibrium points for \eqref{mg} as
\begin{align*}
\mathcal S=\Big\{x\in\R^{n}:F(x)=0\Big\}.
\end{align*}
We first recall the definition of convergence for model \eqref{mg} \cite{Mhi}.
\begin{definition}\label{convergence}\cite{Mfa} Given an analytic function $f(\cdot)$, a sigmoid function $g_{i}(\cdot)$ and three constants $d_{i}$, $\theta_{i}$ and $\lambda_{i}$ specifically, system \eqref{mg} is said to be {\it convergent} \footnote{~This definition is frequently referred to as {\it complete stability} of the system in the neural network literature, see \cite{Mfa}.}
if and only if, for any trajectory $x(t)$ of \eqref{mg}, there exists $x^{\bm*}\in\mathcal S$ such that
\begin{align*}
\lim_{t\to+\infty}x(t)=x^{\bm*}.
\end{align*}
\end{definition}

Since the $\omega$-limit set of any trajectory $x(t)$ for the system \eqref{mg} (i.e., the set of points that are approached by $x(t)$ as $t\to+\infty$) is isolated equilibrium points, the convergence of the system \eqref{mg} is global. Our main focus lies in proving that the state $x(t)$ of the system \eqref{mg} under some given rule can converge to these equilibrium points.

The following lemma shows that all solutions for \eqref{mg} are bounded and there exists at least one equilibrium point.

\begin{lemma}\label{Existence}
Given a constant matrix $D$, a constant vector $\theta$ and two specific functions $f(\cdot)$ and $g(\cdot)$, for any triggering event time sequence $\{t^{i}_k\}_{k=0}^{+\infty}~(i=\oneton)$, there exists a unique solution for the piece-wise cauchy problem \eqref{mg} with some initial data $x(0)\in\R^{n}$. Moreover, the solutions with different initial data are bounded for $t\in[0,+\infty)$.
\end{lemma}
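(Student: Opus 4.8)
The plan is to establish the two claims of Lemma~\ref{Existence} separately: existence/uniqueness of the piece-wise solution, and then uniform boundedness.

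\textbf{Existence and uniqueness.} On each interval $[t^i_k, t^i_{k+1})$ the right-hand side $F_i$ of \eqref{mg} is, as a function of $x_i$, an affine map $-d_i x_i + c_i$ where $c_i = -[\nabla f(y(t^i_k))]_i + \theta_i$ is a \emph{constant} (the triggered value of the synaptic feedback is frozen between consecutive triggering times of neuron $i$). Hence on each such interval the dynamics for component $i$ decouples into a linear scalar ODE with constant coefficients, whose solution exists, is unique, and is given explicitly by the variation-of-constants formula $x_i(t) = e^{-d_i(t-t^i_k)} x_i(t^i_k) + \frac{c_i}{d_i}\big(1 - e^{-d_i(t-t^i_k)}\big)$. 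Since the triggering times are assumed to form a strictly increasing sequence with $t^i_1 = 0$, I would patch these pieces together: starting from $x(0)$, solve on the first piece, use the endpoint as the new initial datum on the next piece, and proceed inductively. Continuity at the switching instants is imposed by definition of the piece-wise Cauchy problem, so the concatenation yields a well-defined, unique, absolutely continuous trajectory on $[0,+\infty)$ (provided the triggering times do not accumulate, i.e. no Zeno behavior — but for an \emph{arbitrary} prescribed strictly increasing sequence this is automatic, and the viability of the actual triggering rule is treated later in the paper).

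\textbf{Boundedness.} Here I would argue componentwise and exploit two facts: the self-inhibition term $-d_i x_i$ with $d_i>0$ is dissipative, and the frozen feedback term is uniformly bounded. Indeed, $g_i$ is sigmoid with range $(0,1)$, so $y(t^i_k) \in (0,1)^n$ lies in a fixed bounded set; since $f$ is analytic (hence $C^1$) its gradient $\nabla f$ is bounded on the compact set $[0,1]^n$, say $\|\nabla f(y)\| \le M$ for all $y\in[0,1]^n$. Therefore $|c_i| = |-[\nabla f(y(t^i_k))]_i + \theta_i| \le M + |\theta_i| =: M_i$, uniformly in $k$ and $t$. From the explicit solution on each piece, or equivalently from the differential inequality $\frac{d}{dt}\tfrac12 x_i^2 = x_i\dot x_i = -d_i x_i^2 + c_i x_i \le -d_i x_i^2 + M_i |x_i|$, one gets that $|x_i(t)|$ cannot exceed $\max\{|x_i(0)|, M_i/d_i\}$; more cleanly, $\limsup_{t\to\infty}|x_i(t)| \le M_i/d_i$ and $|x_i(t)| \le \max\{|x_i(0)|,\, M_i/d_i\}$ for all $t\ge 0$, because whenever $|x_i| > M_i/d_i$ the derivative of $x_i^2$ is strictly negative. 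This bound is independent of the triggering sequence, giving uniform boundedness; stacking over $i$ bounds $\|x(t)\|$. The existence of at least one equilibrium point then follows from a standard fixed-point / Brouwer argument: the map $x \mapsto (\,-\,[\nabla f(g(\Lambda x))]_i/d_i + \theta_i/d_i\,)_i$ maps the invariant box $\prod_i [-M_i/d_i, M_i/d_i]$ (or a slightly enlarged closed ball containing it) continuously into itself, so it has a fixed point, which is precisely a zero of $F$, i.e. an element of $\mathcal S$.

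\textbf{Main obstacle.} None of the steps is deep; the only point requiring care is making the concatenation of pieces rigorous when the triggering sequence $\{t^i_k\}$ is given abstractly — one must note that the statement quantifies over sequences that are genuine strictly increasing sequences indexed by $k=0,1,2,\dots$, so on any finite time horizon only finitely many switches occur and the inductive patching terminates; global existence on $[0,+\infty)$ then follows. The a~priori bound $M_i/d_i$ is exactly what prevents blow-up and also furnishes the invariant compact set needed for the fixed-point argument, so I would present the boundedness estimate first and reuse it for existence of equilibria.
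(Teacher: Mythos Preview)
Your proof is correct and follows essentially the same route as the paper: inductive patching over successive triggering intervals for existence/uniqueness (the paper simply cites the standard ODE existence theorem rather than writing the explicit variation-of-constants formula), and boundedness via the uniform bound on the frozen feedback term combined with the dissipative self-inhibition $-d_i x_i$ (the paper phrases this as a sign condition on $F_i$ outside a large box, you as a differential inequality for $x_i^2$, but the content is identical). The Brouwer fixed-point argument you append for equilibrium existence is not part of Lemma~\ref{Existence}; the paper defers that claim to Lemma~\ref{nontrivial}, citing \cite{Mfa}.
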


\begin{proof} Firstly, we prove the existence and uniqueness of the solution for the system (\ref{mg}). Denotes $t_{k}=[t^{1}_{k},\cdots,t^{n}_{k}]^{\top}$, where $k=\zerotoinfty$. Given a time sequence $\{t^{i}_k\}_{k=0}^{+\infty}~(i=\oneton)$ ordered as $0=t^{i}_{0}<t^{i}_{1}<t^{i}_{2}<\cdots<t^{i}_{k}<\cdots$ (same items in $\{t^{i}_k\}_{k=0}^{+\infty}$ treat as one), there exists a unique solution of \eqref{mg} in the interval $[t_{0},t_{1})$ by using $x(t_{0})=x(0)$ as the initial data (see, existence and uniqueness theorem in \cite{Jkh}). For the next interval $[t_{1},t_{2})$, $x(t_{1})$ can be regarded as the new initial data, which can derive another unique solution in this interval. By induction, we can conclude that there exists a piecewise unique solution over the time interval $t\in[0,+\infty)$, which is for the cauchy problem \eqref{mg} with the initial data $x(0)$.

Secondly, since $0<g_i(x)<1~(x\in\R)$, there exists a constant $M>0$ such that
\begin{align*}
-d_{i}x_{i}(t)-M\leqslant F_{i}\big(x(t)\big)\leqslant-d_{i}x_{i}(t)+M
\end{align*}
Thus for any $\varepsilon_{0}>0$, there exists $r_0>0$ such that
\begin{align*}
\begin{cases}
F_i\big(x(t)\big)<-\varepsilon_0,&~\forall~ x_i(t)\geqslant r_0\\
F_i\big(x(t)\big)>\varepsilon_0,&~\forall~ x_i(t)\leqslant -r_0
\end{cases}
\end{align*}
where $i=\oneton$. Let
\begin{align*}
\mathcal B=\Big\{x\in\R^{n}:\|x\|\leqslant r_{0}\Big\}
\end{align*}
If $x(0)\notin\mathcal B$, $x(t)$ will drop into the set $\mathcal B$ in finite time, which implies that $\mathcal B$ is the $\omega$-limit set of any trajectory $x(t)$ and it is also positively invariant. Thus all the solutions of \eqref{mg} with different initial data are eventually confined in $\mathcal B$, hence they are bounded for $t\in[0,+\infty)$.
\end{proof}

Consider now the set of equilibrium points $\mathcal S$. The following lemma is established in \cite{Mfa}, which shows that there exists at least one equilibrium point in $\mathcal S$.
\begin{lemma}\label{nontrivial}
For the of equilibrium points  $\mathcal S$, the following statements hold:
\begin{enumerate}
\renewcommand{\labelenumi}{\it(\arabic{enumi})}
\item $\mathcal S$ is not empty.\\[-10pt]
\item There exists a constant $r>0$ such that
\begin{align*}
\mathcal S\bigcap\,\Big(\R^{n}\setminus B_{r}(\bf0)\Big)=\emptyset.
\end{align*}
\end{enumerate}
\end{lemma}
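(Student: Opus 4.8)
The plan is to reduce both claims to a single equilibrium equation and then apply a topological degree / fixed-point argument together with the a priori bound already obtained in Lemma \ref{Existence}. First I would write out what $x \in \mathcal S$ means explicitly. At equilibrium every $\dot x_i = 0$, and since at a genuine equilibrium the triggered value $y(t^i_k)$ coincides with the current $y(x)$, the defining system $F(x)=0$ becomes
\begin{align*}
d_i x_i = -\big[\nabla f(g(\Lambda x))\big]_i + \theta_i, \qquad i=\oneton,
\end{align*}
i.e. $x = D^{-1}\big(\theta - \nabla f(g(\Lambda x))\big) =: H(x)$, where $D=\mathrm{diag}\{d_1,\dots,d_n\}$ is invertible because each $d_i>0$. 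So $\mathcal S$ is exactly the fixed-point set of the continuous map $H:\R^n\to\R^n$.

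For claim (1), I would invoke Brouwer's fixed-point theorem. The key input is that $H$ maps a large closed ball into itself: since $0<g_i(\cdot)<1$, the vector $g(\Lambda x)$ ranges over the bounded cube $(0,1)^n$, hence $\nabla f(g(\Lambda x))$ is bounded (as $f$ is analytic, $\nabla f$ is continuous, hence bounded on the compact closure $[0,1]^n$), so $\|H(x)\|\le \|D^{-1}\|(\|\theta\|+M')$ for a constant $M'$ independent of $x$. Taking $R$ at least this large, $H$ maps the closed ball $\overline{B_R(\mathbf 0)}$ continuously into itself, and Brouwer gives a fixed point; thus $\mathcal S\neq\emptyset$. (Alternatively one can quote the boundedness/positive-invariance of $\mathcal B$ from Lemma \ref{Existence} and a standard argument that a dissipative flow on a ball has an equilibrium, but the direct Brouwer route is cleaner.)

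For claim (2), I would show every fixed point of $H$ lies in a fixed ball. If $H(x)=x$ then $\|x\|=\|H(x)\|\le \|D^{-1}\|(\|\theta\|+M')=:r'$, so no equilibrium lies outside $\overline{B_{r'}(\mathbf 0)}$; choosing any $r>r'$ gives $\mathcal S\cap(\R^n\setminus B_r(\mathbf 0))=\emptyset$. In fact the same constant $r_0$ appearing in the proof of Lemma \ref{Existence} works, because $F_i(x)\neq 0$ whenever $|x_i|\ge r_0$, so an equilibrium cannot have any coordinate of magnitude $\ge r_0$ and hence $\|x\|< \sqrt{n}\,r_0$; this reuse of the earlier estimate is the most economical way to conclude.

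The only mild obstacle is bookkeeping: making sure the uniform bound $M'$ on $\nabla f(g(\Lambda x))$ is legitimate (it is, since $g(\Lambda x)\in(0,1)^n$ whose closure is compact and $\nabla f$ is continuous there), and being careful that at an equilibrium the event-triggered feedback term indeed equals the true gradient $[\nabla f(g(\Lambda x))]_i$ — this holds because at a stationary point all the triggering-induced constants have converged to their current values, so $F(x)=0$ for \eqref{mg} is the same algebraic system as for \eqref{mg0.1}. With those two points checked, both statements follow immediately. Since this is exactly the result proved in \cite{Mfa}, I would also remark that it suffices to cite that reference, the short argument above being included only for completeness.
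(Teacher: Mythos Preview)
Your proposal is correct. Note, however, that the paper does not actually prove this lemma: it simply states that the result ``is established in \cite{Mfa}'' and moves on. So there is no in-paper argument to compare against; you have supplied one where the authors chose only to cite.

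That said, your Brouwer fixed-point argument is exactly the standard route used in the neural-network literature (and is essentially what one finds in \cite{Mfa}): rewrite $F(x)=0$ as $x=D^{-1}(\theta-\nabla f(g(\Lambda x)))$, use the sigmoid range $(0,1)^n$ and continuity of $\nabla f$ on $[0,1]^n$ to get a self-map of a closed ball, and conclude both existence and the a priori bound at once. Your bookkeeping remarks are also on point, particularly the observation that at a stationary point the triggered feedback coincides with the instantaneous gradient, so the equilibrium equations for \eqref{mg} and \eqref{mg0.1} agree. Your final sentence---that citing \cite{Mfa} suffices---matches precisely what the paper does.
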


To depict the event that triggers the next feedback basing time point, we introduce the following candidate Lyapunov (or energy) function:
\begin{align}
L(x)=&f(y)+\sum_{i=1}^{n}\bigg[\frac{d_{i}}{\lambda_{i}}\int_{0}^{y_{i}}g_{i}^{-1}(s)ds\bigg]
-\theta^{\top} y.\label{Ly}
\end{align}
where $y=[y_{1},\cdots,y_{n}]$ with $y_{i}=g_{i}(\lambda_{i}x_{i})~(i=\oneton)$. The function $L(x)$ generalizes the Lyapunov function introduced for \eqref{mg0.1} in \cite{Mvi}, and it can also be thought of as the energy function for the Hopfield and the cellular neural networks model \cite{Mac,Loc}. In this paper, we will prove that the candidate Lyapunov function \eqref{Ly} is a strict Lyapunov function \cite{Mfa}, as stated in the following definition.
\begin{definition}
A Lyapunov function $L(\cdot):\R^{n}\rightarrow\R$ is said to be strict if $L\in C^{1}(\R^{n})$, and the derivative of $L$ along trajectories $x(t)$, i.e. $\dot{L}(x(t))$, satisfies $\dot{L}(x)\leqslant
0$  and  $\dot{L}(x)< 0$ for $x\notin \mathcal S$. 
\end{definition}

The next lemma provides an inequality, named ${\L}$ojasiewicz inequality \cite{Slo}. It will be used to prove the
finiteness of length for any trajectory $x(t)$ of the system \eqref{mg}, which can finally derive the convergence of system \eqref{mg}. The definition of trajectory length is also listed in Definition \ref{def2}.
\begin{lemma}\label{Loj}
Consider an analytic and continuous function $H(x):\mathcal D\subseteq\R^{n}\rightarrow\R$. Let
\begin{align*}
\mathcal S_{\nabla}=\Big\{x\in\mathcal D:\nabla H(x)=\bf 0\Big\}.
\end{align*}
For any $x_{s}\in\mathcal S_{\nabla}$, there exist two constants $r(x_{s})>0$ and $0<v(x_{s})<1$, such that
\begin{align*}
\big|H(x)-H(x_{s})\big|^{v(x_{s})}\leqslant\big\|\nabla H(x)\big\|,
\end{align*}
for $x\in B_{r(x_{s})}(x_{s})$.
\end{lemma}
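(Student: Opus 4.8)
\emph{This is precisely the classical {\L}ojasiewicz gradient inequality, so the most direct route is to invoke it from \cite{Slo}--\cite{Slo1}; for completeness I would outline the standard argument.} The statement is local, so I would fix $x_s\in\mathcal S_\nabla$, replace $H$ by $H-H(x_s)$ so that $H(x_s)=0$, and reduce the claim to the following: there exist $r>0$ and $v\in(0,1)$ with $|H(x)|^{v}\le\|\nabla H(x)\|$ for all $x\in B_r(x_s)$. The plan is a proof by contradiction based on the curve selection lemma for semianalytic sets, together with a Puiseux-series comparison of orders of vanishing.

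Concretely, suppose the reduced claim fails. Then for every $v\in(0,1)$ and every $r>0$ the semianalytic set $E_v=\{x:\ \|\nabla H(x)\|^{2}<|H(x)|^{2v}\}$ meets $B_r(x_s)$, hence $x_s\in\overline{E_v}$ (note $E_v\subseteq\{H\neq0\}$ automatically, since $\|\nabla H\|^2\ge0=|H|^{2v}$ wherever $H=0$). Applying the curve selection lemma to $E_v$ I would obtain a nonconstant real-analytic arc $\gamma:[0,\varepsilon)\to B_r(x_s)$ with $\gamma(0)=x_s$ along which $\|\nabla H(\gamma(s))\|<|H(\gamma(s))|^{v}$ for $s\in(0,\varepsilon)$. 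Writing $h(s)=H(\gamma(s))$, the chain rule gives $h'(s)=\langle\nabla H(\gamma(s)),\gamma'(s)\rangle$, whence $|h'(s)|\le\|\nabla H(\gamma(s))\|\,\|\gamma'(s)\|\le|h(s)|^{v}\,\|\gamma'(s)\|$. Expanding $h(s)$, $\|\gamma'(s)\|$ and $\|\nabla H(\gamma(s))\|$ as Puiseux series in $s$ and matching leading exponents, this differential inequality forces a relation between the order of vanishing of $H\circ\gamma$ and that of $\nabla H\circ\gamma$ that cannot hold once $v$ is taken sufficiently close to $1$; this is the desired contradiction. Running the estimate in the other direction pins down an admissible exponent $v(x_s)<1$ intrinsically (the reciprocal of the relevant Puiseux order of $H$), and a standard compactness argument over the unit sphere of arc directions upgrades it to a uniform inequality on a ball $B_{r(x_s)}(x_s)$.

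The genuinely hard part is not the chain-rule estimate but the two structural ingredients underneath it: first, one must verify that the exceptional set $E_v$ is honestly semianalytic so that the curve selection lemma is available — this is exactly where analyticity of $f$ (and hence of $H$ and of $\nabla H$) is used, and where a merely $C^{\infty}$ hypothesis would break down; second, one needs the Puiseux/stratification bookkeeping that turns the qualitative arc $\gamma$ into the explicit pair $(r(x_s),v(x_s))$. Both of these are the substance of {\L}ojasiewicz's original work (and, in the modern treatments, rest on resolution of singularities), so in the write-up I would not reproduce that machinery but simply cite \cite{Slo}--\cite{Slo1} and record the inequality in the form stated, which is all that is needed in the sequel.
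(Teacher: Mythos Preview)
Your proposal is correct and aligns with the paper: the paper gives no proof of this lemma at all, simply recording it as the classical {\L}ojasiewicz gradient inequality with a citation to \cite{Slo}, which is exactly what you conclude one should do. Your additional sketch via the curve selection lemma and Puiseux expansions is a standard outline and goes beyond what the paper provides, but the bottom line---invoke \cite{Slo,Slo1} rather than reprove the result---is the same.
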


\ignore{
\begin{proof} Since the function $H(x)$ is analytic, $H(x)$ has the
following form
\begin{align*}
H(x)&=\sum_{j_{1}+\cdots+j_{n}=0}^{+\infty}a_{j_{1}\cdots
j_{n}}(x_{1}-x_{1}^{0})^{j_{1}}\cdots(x_{n}-x_{n}^{0})^{j_{n}}
\end{align*}
where $j_{i}\geqslant 0~(i=1,\cdots,n)$ and $a_{j_{1}\cdots j_{n}}$ are
constants. For any $x^{0}\in \mathcal S_{\nabla}$, $H(x^{0})=0$. The
gradient of function $H(\cdot)$ can be written as
\begin{align*}
\nabla H(x) &=\begin{bmatrix} \frac{\partial H(x)}{\partial
x_{1}},\cdots,\frac{\partial H(x)}{\partial x_{n}}
\end{bmatrix}^{\top}
\end{align*}
where
\begin{align*}
&\frac{\partial H(x)}{\partial x_{i}}\\
&=\sum_{j_{1}+\cdots+j_{n}=1}^{+\infty}j_{i}\,a_{j_{1}\cdots j_{n}}(x_{1}-x_{1}^{0})^{j_{1}}\cdots(x_{i}-x_{i}^{0})^{j_{i}-1}\\
&\hspace{13ex}\cdots(x_{n}-x_{n}^{0})^{j_{n}}\\
&=\sum_{j_{1}+\cdots+\widetilde{j}_{i}+\cdots+j_{n}=0}^{+\infty}(\widetilde{j}_{i}+1)\,a_{j_{1}\cdots j_{n}}(x_{1}-x_{1}^{0})^{j_{1}}\cdots\\
&\hspace{13ex}\cdots(x_{i}-x_{i}^{0})^{\widetilde{j}_{i}}\cdots(x_{n}-x_{n}^{0})^{j_{n}}\\
&=\sum_{j_{1}+\cdots+j_{n}=0}^{+\infty}(j_{i}+1)\,a_{j_{1}\cdots
j_{n}}(x_{1}-x_{1}^{0})^{j_{1}}\cdots(x_{n}-x_{n}^{0})^{j_{n}}.
\end{align*}
Then we have
\begin{align*}
\big|H(x)\big|^{2}
&=\big|H(x)-H(x^{0})\big|^{2}\\
&=\Bigg|\sum_{j_{1}+\cdots+j_{n}=0}^{+\infty}a_{j_{1}\cdots j_{n}}(x_{1}-x_{1}^{0})^{j_{1}}\cdots(x_{n}-x_{n}^{0})^{j_{n}}\Bigg|^{2}\\
\end{align*}
and
\begin{align*}
&\big\|\nabla H(x)\big\|^{2}\\
&=\sum_{i=1}^{n}\bigg(\frac{\partial H(x)}{\partial x_{i}}\bigg)^{2}\\
&=\sum_{i=1}^{n}\Bigg(\sum_{j_{1}+\cdots+j_{n}=0}^{+\infty}(j_{i}+1)\,a_{j_{1}\cdots j_{n}}(x_{1}-x_{1}^{0})^{j_{1}}\cdots(x_{n}-x_{n}^{0})^{j_{n}}\Bigg)^{2}\\
&=\sum_{i=1}^{n}(j_{i}+1)^{2}\Bigg(\sum_{j_{1}+\cdots+j_{n}=0}^{+\infty}\,a_{j_{1}\cdots j_{n}}(x_{1}-x_{1}^{0})^{j_{1}}\cdots(x_{n}-x_{n}^{0})^{j_{n}}\Bigg)^{2}\\
&=\sum_{i=1}^{n}(j_{i}+1)^{2}\big|H(x)\big|^{2}
\end{align*}
For $H(x)$ is a continuous function and $H(x)\not\equiv0$, it follows
\begin{align*}
\big|H(x)\big|^{2v(x^{0})}\leqslant\sum_{i=1}^{n}(j_{i}+1)^{2}\big|H(x)\big|^{2}
\end{align*}
which means
\begin{align*}
\big|H(x)-H(x^{0})\big|^{v(x^{0})}\leqslant\big\|\nabla H(x)\big\|
\end{align*}
where $v_{x^{0}}\in(0,1)$ and $x\in B_{r(x^{0})}(x^{0})$. This prove the
proposition.
\end{proof}}

\begin{definition}\label{def2}
Let $x(t)$ on $t\in[0,+\infty)$, be some trajectory of \eqref{mg}. For any
$t>0$, the length of the trajectory on $[0,t)$ is given by
\begin{align*}
l_{[0,t)}=\int_{0}^{t}\big\|\dot{x}(s)\big\|ds.
\end{align*}
\end{definition}
It was pointed out in \cite{Tpc} that finite length implied the
convergence of the trajectory, and was also used to discuss the
global stability of the analytic neural networks in \cite{Mfa}.

{\color{blue}\section{Distributed event-triggered design}\label{sec3}
In this section we synthesize distributed triggers that prescribe when neurons should broadcast state information and update their control signals.
Section \ref{Primary} presents the evolution of a quadratic function that measures network disagreement to identify a triggering function and discusses the problems that arise in its implementation. These observations are our starting point in Section \ref{MorseSardTheorem} and Section \ref{ZenoBehavior}, where we should overcome these implementation issues.}

To design appropriate triggering time point $\{t^{i}_k\}_{k=0}^{+\infty}$ of system \eqref{mg} for $i=\oneton$, we define the state measurement error vector $e(t)=[e_{1}(t),\cdots,e_{n}(t)]^{\top}$ where
\begin{align*}
e_{i}(t)=\Big[\nabla f\big(y(t)\big)\Big]_{i}-\Big[\nabla f\big(y(t_{k}^{i})\big)\Big]_{i}
\end{align*}
for $t\in[t_k^{i},t_{k+1}^{i})$ with $i=\oneton$ and $k=\zerotoinfty$.

\subsection{Distributed Event-triggered Rule}\label{Primary}

{\color{blue}
To design the triggering function $T_{i}(e_{i},t)$ for the updating rule, we define a function vector $\Psi(t)=[\Psi_{1}(t),\cdots,\Psi_{n}(t)]^{\top}$ with
\begin{align*}
\Psi_{i}(t)=\delta(t)\,{\rm e}^{-d_{i}(t-t_{k}^{i})},
\end{align*}
where \footnote{\color{blue}~The function $\Psi_{i}(t)$ can be thought of as a normalized function of $|F_{i}(x(t))|$ by exponential decay function ${\rm e}^{-d_{i}(t-t_{k}^{i})}$. Thus the coefficient $\delta(t)$ with respect to time $t$ in Eq. \eqref{normalization} can be seen as a parameter from this normalization process.}

\begin{align}\label{normalization}
\delta(t)=\frac
{\sqrt{\sum\limits_{i=1}^{n}\Big|F_{i}\big(x(t)\big)\Big|^{2}}}
{\sqrt{\sum\limits_{i=1}^{n}{\rm e}^{-2d_{i}(t-t_{k}^{i})}}},
\end{align}
for $i=\oneton$ and $k=\zerotoinfty$.

What we can observe is a neuron's state $x_{i}(t)$ at a particular time point or a time period (a subset of $[0,+\infty)$) from system \eqref{mg}. Given a specific analytic function $f(\cdot)$, we can directly figure out the right-hand term $F_{i}(x(t))$ without knowing the theoretical formula of the trajectory $x(t)$ of the system \eqref{mg} on $[0,+\infty)$ in advance. Thus, $\delta(t)$ can also be calculated straightly. The samplings for $x_{i}(t)~(i=\oneton)$ with continuous monitoring or discrete-time monitoring would determine the efficiency level and the adjustment cost for the system's convergence. The continuous monitoring can ensure a high level of efficiency with large costs, while discrete-time monitoring on $x_{i}(t)$ can reduce the cost, but sacrifice the efficiency. In Section \ref{Monitoring}, we will discuss the discrete-time monitoring and give a prediction algorithm for the triggering time point $t_{k}^{i}$ based on the obtained information of $x_{i}(t)$ for all $i=\oneton$.
}

\begin{theorem}\label{PrimaryRule}
Set $t_{k+1}^{i}$ as the time point by the updating rule that
\begin{align*}
t_{k+1}^{i}=\max_{\tau\geqslant t_{k}^{i}}\Big\{\tau:T_{i}\big(e_{i},t\big)
\leqslant 0,~\forall~ t\in[t^i_{k},\tau)\Big\},
\end{align*}
that is
\begin{align}\label{PrimaryRule1}
t_{k+1}^{i}=\max_{\tau\geqslant t_{k}^{i}}\bigg\{\tau&:\big|e_{i}(t)\big|
\leqslant\gamma\Psi_{i}(t),~\forall~ t\in[t^i_{k},\tau)\bigg\},
\end{align}
for $i=\oneton$ and $k=\zerotoinfty$. Then, system \eqref{mg} is convergent.
\end{theorem}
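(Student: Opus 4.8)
The plan is to show that the energy function $L$ of \eqref{Ly} is a strict Lyapunov function for the triggered dynamics \eqref{mg}, to deduce from that the finiteness of the trajectory length, and then to run the ${\L}$ojasiewicz argument of \cite{Mfa,Tpc} (Lemma \ref{Loj}) essentially unchanged to obtain convergence to a single equilibrium.

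\emph{Step 1 (the rule controls the measurement error).} The first task is to turn \eqref{PrimaryRule1} into a bound on $e(t)$. Squaring $\Psi_i(t)=\delta(t){\rm e}^{-d_i(t-t^i_k)}$ and summing over $i$, the numerator of $\delta(t)$ in \eqref{normalization} cancels its denominator and leaves the identity $\sum_{i=1}^{n}\Psi_i(t)^2=\|F(x(t))\|^2$. Hence $|e_i(t)|\le\gamma\Psi_i(t)$ forces $\|e(t)\|\le\gamma\|F(x(t))\|=\gamma\|\dot x(t)\|$ for all $t$. Moreover, since $[\nabla f(y(t^i_k))]_i=[\nabla f(y(t))]_i-e_i(t)$, the triggered field satisfies $F(x(t))=\widetilde F(x(t))+e(t)$, where $\widetilde F(x)=-Dx-\nabla f(g(\Lambda x))+\theta$ is the continuous field of \eqref{mg0.1}; thus $(1-\gamma)\|F\|\le\|\widetilde F\|\le(1+\gamma)\|F\|$, so $\|\dot x(t)\|$ is, up to constants, a genuinely continuous function of $x(t)$ (this uses $\gamma<1$, which the construction presupposes).

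\emph{Step 2 ($L$ is a strict Lyapunov function).} Differentiating \eqref{Ly} along a trajectory and using $g_i^{-1}(y_i)=\lambda_i x_i$ gives $\dot L=\langle\nabla L(x),\dot x\rangle$ with $[\nabla L(x)]_i=w_i(x)\,a_i$, where $w_i(x)=\lambda_i g_i'(\lambda_i x_i)>0$ and $a_i=[\nabla f(y)]_i+d_i x_i-\theta_i$; since $\dot x_i=-a_i+e_i$, this rearranges to $\dot L=-\sum_i w_i\dot x_i^2+\sum_i w_i e_i\dot x_i$. By Lemma \ref{Existence} the trajectory stays in a compact set on which $0<\underline w\le w_i(x)\le\overline w$; combining this with Step 1 and Cauchy--Schwarz, the cross term is absorbed into the dissipative term as long as $\gamma$ lies below an explicit threshold fixed by $\underline w/\overline w$, giving $\dot L\le-c_1\|\dot x\|^2$, and also $c_2\|\dot x\|\le\|\nabla L(x)\|\le c_3\|\dot x\|$ on the trajectory. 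Since $y\in(0,1)^n$ always, $L$ is bounded below, so $L(x(t))$ decreases to some $L^*$ and $\int_0^{\infty}\|\dot x(s)\|^2\,ds<\infty$ (the bound survives the jumps of $\dot x$ at triggering times because $L(x(t))$ is Lipschitz in $t$ and differentiable off a countable set). A short contradiction argument then shows that every $\omega$-limit point $x^*$ lies in $\mathcal S$, equivalently $\nabla L(x^*)=0$ with $L(x^*)=L^*$: if $\widetilde F(x^*)\neq0$, continuity of $\widetilde F$, the bound $\|\dot x\|\ge(1+\gamma)^{-1}\|\widetilde F\|$, and Lipschitzness of $x(\cdot)$ keep $\|\dot x\|$ bounded away from $0$ on infinitely many disjoint time intervals of fixed length, contradicting $\int_0^{\infty}\|\dot x\|^2\,ds<\infty$.

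\emph{Step 3 (finite length via ${\L}$ojasiewicz).} Fix such an $x^*$; by Lemma \ref{Loj} there are $r>0$ and $\nu\in(0,1)$ with $|L(x)-L^*|^{\nu}\le\|\nabla L(x)\|$ on $B_r(x^*)$. We may assume $L(x(t))>L^*$ for all $t$ (otherwise $L$, hence $\dot x$, is eventually $\equiv0$ and we are done). Then, using $\dot L\le-c_1\|\dot x\|^2$ and $\|\nabla L\|\le c_3\|\dot x\|$, one gets whenever $x(t)\in B_r(x^*)$
\begin{align*}
\frac{d}{dt}\big(L(x(t))-L^*\big)^{1-\nu}\le-(1-\nu)\frac{c_1}{c_3}\,\|\dot x(t)\|,
\end{align*}
so every arc of the trajectory inside $B_r(x^*)$ has length at most a fixed multiple of $(L(x(t_1))-L^*)^{1-\nu}$. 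Picking $t_1$ with $x(t_1)$ so close to $x^*$ that this quantity is below $r/2$, the usual no-escape argument confines $x(t)$ to $B_r(x^*)$ for all $t\ge t_1$; hence $\int_0^{\infty}\|\dot x\|\,ds<\infty$, $x(t)$ is Cauchy, and $x(t)\to x^{**}$, which, being the unique limit, must equal $x^*\in\mathcal S$ --- i.e. system \eqref{mg} is convergent in the sense of Definition \ref{convergence}. The ${\L}$ojasiewicz machinery here is exactly that of \cite{Mfa,Tpc} and should go through once Step 2 is in place; the real difficulty, and the step I expect to be the main obstacle, is Step 2 --- verifying that the triggering errors do not destroy the monotonicity of $L$ --- which rests squarely on the normalization designed into $\delta(t)$ (precisely what makes $\sum_i\Psi_i^2=\|F\|^2$) and on pinning down how small $\gamma$ must be taken relative to the spread of the $w_i$ over the absorbing set of Lemma \ref{Existence}.
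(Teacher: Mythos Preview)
Your proposal is correct and follows the same overall architecture as the paper: the identity $\sum_i\Psi_i^2=\|F(x(t))\|^2$ coming from the normalization \eqref{normalization}, the computation $\dot L=-\sum_i w_iF_i^2+\sum_i w_ie_iF_i$ and the absorption of the cross term under a smallness condition on $\gamma$ are exactly the paper's Proposition~\ref{Proposition1}, and the conclusion via finite trajectory length and the Cauchy criterion is the paper's final step.

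The one organisational difference is in your Step~3. The paper does not work locally around a single $\omega$-limit point; instead it first proves that the equilibrium set carries only finitely many energy levels $L_1>\cdots>L_m$ (Proposition~\ref{Proposition2}, itself an application of Lemma~\ref{Loj}), builds compact energy slabs $\mathcal K_j$ around each $\mathcal S_j$ (Proposition~\ref{Proposition3}), establishes a \emph{uniform} {\L}ojasiewicz-type inequality $|\dot L|/\|F\|\ge c_j|L-L_j|^{v_j}$ on $\mathcal K_j\setminus\mathcal S$ (Proposition~\ref{Proposition4}), and only then integrates to finite length (Proposition~\ref{Proposition5}). Your route---pick one $\omega$-limit point $x^*$, apply Lemma~\ref{Loj} on $B_r(x^*)$, and run the standard no-escape/trapping argument---is the more common one in the {\L}ojasiewicz gradient-flow literature and is shorter; the paper's route extracts the extra structural fact that the critical values of $L$ are finite in number and gives constants $c_j,v_j$ that are uniform over whole components of the equilibrium set rather than just near a single point.
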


The proof of this theorem comprises of the five propositions as follow.

\begin{proposition}\label{Proposition1}
Under the assumptions in Theorem \ref{PrimaryRule}, $L(x)$ in \eqref{Ly} serves as a strict Lyapunov function for the system \eqref{mg}.
\end{proposition}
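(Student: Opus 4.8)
The plan is to verify the two defining properties of a strict Lyapunov function in turn: the regularity $L\in C^{1}(\R^{n})$, and the sign of $\dot L$ along trajectories. Regularity is immediate: $f$ is analytic, each $g_{i}(x)=1/(1+{\rm e}^{-x})$ is smooth with $g_{i}'>0$ everywhere, so $g_{i}^{-1}$ is smooth and $y_{i}\mapsto\int_{0}^{y_{i}}g_{i}^{-1}(s)\,ds$ is $C^{1}$; composing with $x_{i}\mapsto y_{i}=g_{i}(\lambda_{i}x_{i})$ keeps us in $C^{1}(\R^{n})$. Differentiating \eqref{Ly} and using $g_{i}^{-1}(y_{i})=\lambda_{i}x_{i}$, the chain rule gives
\[
\frac{\partial L}{\partial x_{i}}(x)=\lambda_{i}\,g_{i}'(\lambda_{i}x_{i})\Big([\nabla f(y)]_{i}+d_{i}x_{i}-\theta_{i}\Big),\qquad i=\oneton .
\]

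Next I would compute $\dot L$ along a trajectory $x(t)$ of \eqref{mg}. On each inter-event interval $x(t)$ is $C^{1}$, hence $L(x(t))$ is continuous for all $t$ and differentiable off the discrete set of triggering times, so it suffices to bound its derivative on those intervals. From \eqref{mg} and the definition of $e_{i}(t)$ one has $\dot x_{i}(t)=-d_{i}x_{i}-[\nabla f(y(t))]_{i}+e_{i}(t)+\theta_{i}$, i.e. $[\nabla f(y)]_{i}+d_{i}x_{i}-\theta_{i}=e_{i}(t)-\dot x_{i}(t)$. Substituting this into the formula for $\partial L/\partial x_{i}$ yields
\[
\dot L(x(t))=\sum_{i=1}^{n}\lambda_{i}\,g_{i}'(\lambda_{i}x_{i})\big(e_{i}(t)\dot x_{i}(t)-\dot x_{i}(t)^{2}\big).
\]

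To sign this I would combine three ingredients. (i) By Lemma \ref{Existence} the trajectory is bounded, $\sup_{t\ge 0}\|x(t)\|=:R<\infty$, and since $g_{i}'(s)=g_{i}(s)(1-g_{i}(s))$ is continuous and strictly positive, the weights $w_{i}(t):=\lambda_{i}g_{i}'(\lambda_{i}x_{i}(t))$ satisfy $0<\underline w\le w_{i}(t)\le\overline w$ for constants depending only on $R$ and the $\lambda_{i}$. (ii) The normalization $\delta(t)$ in \eqref{normalization} is designed precisely so that $\sum_{i}\Psi_{i}(t)^{2}=\sum_{i}|F_{i}(x(t))|^{2}=\|\dot x(t)\|^{2}$; hence the rule \eqref{PrimaryRule1}, which enforces $|e_{i}(t)|\le\gamma\Psi_{i}(t)$ for every $i$, implies $\|e(t)\|\le\gamma\|\dot x(t)\|$. (iii) By weighted Cauchy--Schwarz, $\sum_{i}w_{i}e_{i}\dot x_{i}\le\big(\sum_{i}w_{i}e_{i}^{2}\big)^{1/2}\big(\sum_{i}w_{i}\dot x_{i}^{2}\big)^{1/2}$, so $\dot L(x(t))\le 0$ once $\sum_{i}w_{i}e_{i}^{2}\le\sum_{i}w_{i}\dot x_{i}^{2}$, and the latter follows from $\sum_{i}w_{i}e_{i}^{2}\le\overline w\|e\|^{2}\le\gamma^{2}\overline w\|\dot x\|^{2}$ together with $\sum_{i}w_{i}\dot x_{i}^{2}\ge\underline w\|\dot x\|^{2}$, under the smallness hypothesis on $\gamma$ of Theorem \ref{PrimaryRule} (a sufficient condition being $\gamma^{2}\overline w\le\underline w$). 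Finally, equality in this chain forces $\dot x(t)=F(x(t))=0$, i.e. $x(t)\in\mathcal S$; therefore $\dot L(x(t))<0$ whenever $x(t)\notin\mathcal S$, which is exactly the strict Lyapunov property.

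I expect the main obstacle to be the state-dependent weights $\lambda_{i}g_{i}'(\lambda_{i}x_{i})$: since $\Psi_{i}$ is \emph{not} comparable to $|F_{i}|$ componentwise, the entrywise bound $|e_{i}|\le\gamma\Psi_{i}$ cannot be used entrywise and one must pass through the global identity $\|\Psi\|=\|F\|$, after which the ratio $\overline w/\underline w$ inevitably enters the admissible range of $\gamma$. A secondary point needing care is that $t\mapsto x(t)$ is only piecewise $C^{1}$, which is why the argument is carried out on inter-event intervals and combined with the continuity of $L(x(t))$ across triggering times to conclude that $L$ is nonincreasing along the whole trajectory.
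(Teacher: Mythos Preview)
Your proposal is correct and follows essentially the same route as the paper: both compute $\partial L/\partial x_{i}=-\lambda_{i}g_{i}'(\lambda_{i}x_{i})\big(F_{i}(x)-e_{i}\big)$, both rely on the crucial identity $\sum_{i}\Psi_{i}(t)^{2}=\sum_{i}|F_{i}(x(t))|^{2}$ built into the normalization~\eqref{normalization}, and both pass through two-sided bounds on the weights $\lambda_{i}g_{i}'(\lambda_{i}x_{i})$ (available because the trajectory is bounded) to obtain $\dot L\le -c\,\|F(x)\|^{2}$ for some $c>0$ under a smallness condition on $\gamma$. The only cosmetic difference is that the paper splits the cross term $e_{i}F_{i}$ via Young's inequality $|e_{i}F_{i}|\le\frac{1}{2a}e_{i}^{2}+\frac{a}{2}F_{i}^{2}$ and then bounds the weights, arriving at $\dot L\le-(\alpha-\beta\gamma^{2})\|F\|^{2}$, whereas you use the weighted Cauchy--Schwarz inequality; the resulting admissible range for $\gamma$ is qualitatively the same. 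Your explicit treatment of the $C^{1}$ regularity of $L$ and of the piecewise-$C^{1}$ nature of $x(t)$ across triggering instants is additional care that the paper omits but that does not change the argument.
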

\begin{proof} The partial derivative of the candidate Lyapunov function $L(x)$ along the trajectory $x(t)$ can be written as \footnote{~In order to avoid ambiguity, we point out that $F_i(x(t))=-d_{i}x(t)-\big[\nabla f\big(y(t^i_{k_i(t)})\big)\big]_{i}+\theta_{i}$, for $i=\oneton$. }
\begin{align}
&\frac{\partial}{\partial x_i}L\big(x(t)\big)\nonumber\\
=&-\lambda_{i}g'\big(\lambda_{i}x_{i}(t)\big)\bigg\{-d_{i}x_{i}(t)
-\Big[\nabla f\big(y(t)\big)\Big]_{i}+\theta_{i}\bigg\}\nonumber\\
=&-\lambda_{i}g'\big(\lambda_{i}x_{i}(t)\big)\bigg\{-d_{i} x_{i}(t)
-\Big[\nabla f\big(y(t^{i}_{k})\big)\Big]_{i}+\theta_{i}\nonumber\\
 &-\Big[\nabla f\big(y(t)\big)\Big]_{i}
+\Big[\nabla f\big(y(t^{i}_{k})\big)\Big]_{i}\bigg\}\nonumber\\
=&-\lambda_{i}g'\big(\lambda_{i}x_{i}(t)\big)\Big[F_{i}\big(x(t)\big)-e_i(t)\Big],
\label{dLyx}
\end{align}
and the time derivative of $L(x(t))$
\begin{align*}
 &~\frac{\rm d}{{\rm d}t}L\big(x(t)\big)\nonumber\\
=&~\sum_{i=1}^{n}\frac{\partial}{\partial x_{i}}L\big(x(t)\big)\frac{{\rm d}x_{i}(t)}{{\rm d}t}\nonumber\\
=&-\sum_{i=1}^{n}\lambda_{i}g'\big(\lambda_{i}x_{i}(t)\big)\Big[F_{i}\big(x(t)\big)-e_{i}(t)\Big]F_{i}\big(x(t)\big).
\end{align*}
Consider the inequality
\begin{align*}
\Big|e_{i}(t)F_{i}\big(x(t)\big)\Big|\leqslant\frac{1}{2c}\big|e_{i}(t)\big|^{2}+\frac{c}{2}\Big|F_{i}\big(x(t)\big)\Big|^{2}.
\end{align*}
The time derivative $\frac{\rm d}{{\rm d}t}L\big(x(t)\big)=\dot{L}\big(x(t)\big)$ can be bounded as
\begin{align*}
 \dot{L}\big(x(t)\big)
=&-\sum_{i=1}^{n}\lambda_{i}g'\big(\lambda_{i}x_{i}(t)\big)
\bigg[\Big|F_{i}\big(x(t)\big)\Big|^{2}-e_{i}(t)F_{i}\big(x(t)\big)\bigg]\nonumber\\
\leqslant
 &-\Big(1-\frac{a}{2}\Big)\sum_{i=1}^{n}\lambda_{i}g'\big(\lambda_{i}x_{i}(t)\big)\Big|F_{i}\big(x(t)\big)\Big|^{2}\nonumber\\
 &+\frac{1}{2a}\sum_{i=1}^{n}\lambda_{i}g'\big(\lambda_{i}x_{i}(t)\big)\big|e_{i}(t)\big|^{2}\nonumber\\
\leqslant
 &-\alpha\sum_{i=1}^{n}\Big|F_{i}\big(x(t)\big)\Big|^{2}+\beta\sum_{i=1}^{n}\big|e_{i}(t)\big|^{2}
\end{align*}
By using the rule \eqref{PrimaryRule1}, it holds
\begin{align}\label{dLy0.1}
\dot{L}\big(x(t)\big)
\leqslant
 &-\alpha\sum_{i=1}^{n}\Big|F_{i}\big(x(t)\big)\Big|^{2}+\beta\gamma^{2}\sum_{i=1}^{n}\Psi_{i}^{2}(t)\nonumber\\
=&-\alpha\sum_{i=1}^{n}\Big|F_{i}\big(x(t)\big)\Big|^{2}
  +\beta\gamma^{2}\sum_{i=1}^{n}\Big(\delta(t)\Big)^{2}{\rm e}^{-2d_{i}(t-t_{k}^{i})}\nonumber\\
=&-\big(\alpha-\beta\gamma^{2}\big)\sum_{i=1}^{n}\Big|F_{i}\big(x(t)\big)\Big|^{2}\leqslant0
\end{align}
for all $k=\zerotoinfty$. For any $x\notin \mathcal S$, there exits $i_0\in\{\oneton\}$ such that
$F_{i_0}(x)\neq0$. Thus $\dot{L}(x)<0$. Proposition \ref{Proposition1} is proved.
\end{proof}

With the Lyapunov function $L(x)$ for system (\ref{mg}) and the event triggering condition \eqref{PrimaryRule1}, the consequent proof follows \cite{Mfa} with necessary modifications.
\begin{proposition}\label{Proposition2}
There exist finite different energy levels $L_j~(j=\onetom)$, such that each set
of equilibrium points
\begin{align*}
\mathcal S_j=\Big\{x\in\mathcal S:L(x)=L_j\text{ and }j=\onetom\Big\}
\end{align*}
is not empty.
\end{proposition}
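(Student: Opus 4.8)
The plan is to show that $\mathcal{S}$ is precisely the set of critical points of the \emph{analytic} function $L$ in \eqref{Ly}, and then to use the {\L}ojasiewicz inequality (Lemma \ref{Loj}) together with compactness of $\mathcal{S}$ to conclude that $L$ attains only finitely many values on it. First I would record the gradient identity: specializing the computation \eqref{dLyx} to a constant trajectory (so that the triggered reference outputs $y(t^i_k)$ coincide with $y=g(\Lambda x)$ and hence $e_i\equiv0$) gives
\[
\frac{\partial}{\partial x_i}L(x)=-\lambda_i g_i'(\lambda_i x_i)\Big(-d_i x_i-\big[\nabla f(g(\Lambda x))\big]_i+\theta_i\Big).
\]
Because $\lambda_i>0$ and the sigmoid satisfies $g_i'(\cdot)>0$ everywhere, the bracketed factor vanishes exactly when the partial derivative does; hence $\nabla L(x)=\mathbf 0$ iff $x\in\mathcal S$, i.e. $\mathcal S$ is the critical set of $L$. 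Moreover $L$ is analytic on $\R^n$: $f$ is analytic by assumption, $g_i$ and (on $(0,1)$) $g_i^{-1}$ are analytic, so each $y_i\mapsto\int_0^{y_i}g_i^{-1}(s)\,ds$ is analytic, and composing with the analytic map $x_i\mapsto g_i(\lambda_i x_i)$ preserves analyticity.

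Next I would argue by compactness. The set $\mathcal S$ is closed, being the zero set of a continuous map, and bounded by Lemma \ref{nontrivial}(2), hence compact. Apply Lemma \ref{Loj} with $H=L$, so that $\mathcal S_\nabla=\mathcal S$: for each $x_s\in\mathcal S$ there are $r(x_s)>0$ and $v(x_s)\in(0,1)$ with $|L(x)-L(x_s)|^{v(x_s)}\leqslant\|\nabla L(x)\|$ on $B_{r(x_s)}(x_s)$. If some other equilibrium $x'\in\mathcal S$ lies in $B_{r(x_s)}(x_s)$, then $\nabla L(x')=\mathbf 0$ by the previous step, so $|L(x')-L(x_s)|^{v(x_s)}\leqslant0$ and therefore $L(x')=L(x_s)$. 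Thus $L$ is locally constant on $\mathcal S$. Covering the compact set $\mathcal S$ by finitely many balls $B_{r(x_{s_1})}(x_{s_1}),\dots,B_{r(x_{s_N})}(x_{s_N})$, the restriction of $L$ to $\mathcal S$ takes a single value on $\mathcal S\cap B_{r(x_{s_l})}(x_{s_l})$ for each $l$, so $L(\mathcal S)$ is finite, say $L(\mathcal S)=\{L_1,\dots,L_m\}$ with $m\leqslant N$. Putting $\mathcal S_j=\{x\in\mathcal S:L(x)=L_j\}$ yields the claimed nonempty decomposition.

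I expect the only substantive step to be "$L$ is locally constant on $\mathcal S$", which is exactly where Lemma \ref{Loj} is doing the work; the remaining ingredients — analyticity of $L$, the correspondence between $\nabla L=\mathbf 0$ and $\mathcal S$, and compactness — are routine. A minor point deserving explicit care is that the $\mathcal S$ defined via $F(x)=\mathbf 0$ really coincides with the critical set of $L$: this is legitimate because at an equilibrium the trajectory is constant, so the latest-triggering-time outputs agree with the current ones and the delay in $F$ disappears.
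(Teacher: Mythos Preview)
Your proof is correct and follows essentially the same approach as the paper: both identify $\mathcal S$ with the critical set of the analytic function $L$, invoke the {\L}ojasiewicz inequality to see that $L$ takes a single value on $\mathcal S$ near any critical point, and then use compactness (from Lemma \ref{nontrivial}) to pass to a global finiteness statement. The only cosmetic difference is that the paper argues by contradiction via a convergent subsequence of equilibria with distinct levels, whereas you argue directly via a finite subcover.
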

\begin{proof}
Given an analytic function $f(\cdot)$, a sigmoid function $g_{i}(\cdot)$ and three constants $d_{i}$, $\theta_{i}$ and $\lambda_{i}$ specifically, it follows that the candidate Lyapunov function $L(x)$ in \eqref{Ly} is analytic on $\R^{n}$.

Suppose that there exist infinite different values
$L_{j}~(j=1,\cdots,+\infty)$ such that $\mathcal S_{j}=\{x\in\mathcal
S:L(x)=L_{j}\}$ is not empty. From Lemma \ref{nontrivial}, it is known that
there exists $r_{1}>0$ such that outside $B_{r_{1}}(\bf 0)$ there are no
equilibrium points. Hence $\mathcal S_{j}\subset B_{r_{1}}(\bf 0)$ for
$j=1,\cdots,+\infty$.

Consider points $x^{j}\in\mathcal S_{j}$ for $j=1,\cdots,+\infty$. Since
$x^{j}\in\mathcal S$, it holds $F(x^{j})=0$ and from Eq. \eqref{dLyx}, $\nabla
L(x^{j})=\bf 0$. Since $\overline{B_{r_{1}}(\bf 0)}$ is a compact set, hence, there
exist a point $\widetilde{x}$ and a subsequence
$\{x^{j_{h}}\}_{h=1}^{+\infty}$ such that $x^{j_{h}}\neq\widetilde{x}$ for
all $h=1,\cdots,+\infty$ and $x^{j_{h}}\to\widetilde{x}$ as $h\to+\infty$.
Since $\nabla L$ is continuous, taking into account that $\nabla
L(x^{j_{h}})=\bf 0$ for all $h=1,\cdots,+\infty$, it results $\nabla
L(\widetilde{x})=\bf 0$.

According to Lemma \ref{Loj}, there exist $r(\widetilde{x})>0$ and
$v(\widetilde{x})\in(0,1)$ such that
$|L(x)-L(\widetilde{x})|^{v(\widetilde{x})}\leqslant\|\nabla
L(x)\|$ for $x\in B_{r(\widetilde{x})}(\widetilde{x})$. Since
$x^{j_{h}}\to\widetilde{x}$ as $h\to+\infty$ and $x^{j_{h}}\in\mathcal
S_{j_{h}}$ have different energy levels $L_{j_{h}}$, we can pick a point
$x^{j_{h_{0}}}\in B_{r(\widetilde{x})}(\widetilde{x})$ such that
$L(x^{j_{h_{0}}})\neq L(\widetilde{x})$. Then
\begin{align*}
0<\big|L(x^{j_{h_{0}}})-L(\widetilde{x})\big|^{v(\widetilde{x})}
\leqslant\big\|\nabla L(x^{j_{h_{0}}})\big\|=0,
\end{align*}
which is a contradiction. This completes the proof.
\end{proof}

Without loss of generality, assume that the energy levels
$L_{j}~(j=\onetom)$ are ordered as $L_{1}>L_{2}>\cdots>L_{m}$. Thus
there exists $\gamma>0$ such that $L_{j}>L_{j+1}+2\gamma$, for any
$j=1,2,\cdots,m-1$. For any given $\varepsilon>0$, define
\begin{align*}
\Gamma_{j}=\Big\{x\in\R^{n}:\text{dist}\,(x,\mathcal S_{j})\leqslant\varepsilon\Big\},
\end{align*}
and
\begin{align}\label{Kset}
\mathcal
K_{j}=\overline{\Gamma}_{j}\bigcap\Big\{x\in\R^{n}:L(x)\in\big[L_{j}-\gamma,L_{j}+\gamma\big]\Big\}.
\end{align}

\begin{proposition}\label{Proposition3}
For $j=1,2,\cdots,m$, $\mathcal K_{j}$ is a
compact set and $\mathcal K_{j}\bigcap\mathcal S=\mathcal S_{j}$.
\end{proposition}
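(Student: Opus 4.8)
The plan is to verify the two claims in turn, both by elementary point-set topology once Proposition \ref{Proposition2} and the spacing condition $L_j > L_{j+1}+2\gamma$ are in hand.

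\emph{Compactness of $\mathcal K_j$.} First I would show $\mathcal K_j$ is bounded. By Lemma \ref{nontrivial}(2) the whole equilibrium set satisfies $\mathcal S \subseteq B_r(\mathbf{0})$ for some $r>0$, so in particular $\mathcal S_j \subseteq \mathcal S$ is bounded; since every point of $\Gamma_j$ lies within distance $\varepsilon$ of $\mathcal S_j$, we get $\Gamma_j \subseteq B_{r+\varepsilon}(\mathbf{0})$, hence so is $\overline{\Gamma}_j$ and a fortiori $\mathcal K_j$. Next, closedness: the map $x \mapsto \text{dist}(x,\mathcal S_j)$ is $1$-Lipschitz, hence continuous, so $\Gamma_j$ is the preimage of the closed interval $[0,\varepsilon]$ and is therefore already closed (so $\overline{\Gamma}_j = \Gamma_j$); likewise $L$ is continuous (indeed analytic, as recorded in the proof of Proposition \ref{Proposition2}), so $\{x : L(x) \in [L_j-\gamma, L_j+\gamma]\}$ is closed. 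A finite intersection of closed sets is closed, so $\mathcal K_j$ is closed and bounded, hence compact.

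\emph{The identity $\mathcal K_j \cap \mathcal S = \mathcal S_j$.} The inclusion $\mathcal S_j \subseteq \mathcal K_j \cap \mathcal S$ is immediate: for $x \in \mathcal S_j$ we have $\text{dist}(x,\mathcal S_j)=0 \leqslant \varepsilon$, so $x \in \Gamma_j \subseteq \overline{\Gamma}_j$, and $L(x)=L_j \in [L_j-\gamma, L_j+\gamma]$, so $x \in \mathcal K_j$; and $x \in \mathcal S$ by the definition of $\mathcal S_j$. For the reverse inclusion, let $x \in \mathcal K_j \cap \mathcal S$. By Proposition \ref{Proposition2}, $L$ takes only the finitely many values $L_1 > \cdots > L_m$ on $\mathcal S$, so $L(x)=L_{j'}$ for some $j' \in \{1,\dots,m\}$; and $x \in \mathcal K_j$ gives $L_{j'}=L(x) \in [L_j-\gamma, L_j+\gamma]$. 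Here the spacing $L_j > L_{j+1}+2\gamma$ makes the intervals $[L_j-\gamma, L_j+\gamma]$, $j=1,\dots,m$, pairwise disjoint, so $L_{j'} \in [L_j-\gamma, L_j+\gamma]$ forces $j'=j$; hence $L(x)=L_j$ and therefore $x \in \mathcal S_j$.

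The whole argument is routine; the only point needing care --- and the closest thing to an obstacle --- is to remember that $\mathcal S$ is covered by the $\mathcal S_j$ with exactly the energy levels $L_1,\dots,L_m$ (so that every equilibrium has $L$-value in $\{L_1,\dots,L_m\}$), which is precisely what Proposition \ref{Proposition2} gives, combined with the choice of $\gamma$ separating these levels.
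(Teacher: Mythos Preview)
Your proof is correct and follows essentially the same approach as the paper: compactness via boundedness of $\mathcal S_j$ (Lemma~\ref{nontrivial}) plus closedness of the two defining sets, and the identity $\mathcal K_j\cap\mathcal S=\mathcal S_j$ as a consequence of Proposition~\ref{Proposition2} together with the $\gamma$-spacing of the energy levels. The paper's proof is simply terser (it calls the second part ``an immediate consequence of Proposition~\ref{Proposition2}''), whereas you have spelled out the disjointness-of-intervals step explicitly.
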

\begin{proof} From Lemma \ref{nontrivial}, $\mathcal S_{j}\in B_{r_{1}}(\bf
0)$ is bounded, hence $\overline{\Gamma}_{j}$ is a compact set and
$\big\{x\in\R^{n}:L(x)\in[L_{j}-\gamma,L_{j}+\gamma]\big\}$ is a closed
set. Thus, $\mathcal
K_{j}=\overline{\Gamma}_{j}\bigcap\big\{x\in\R^{n}:L(x)\in[L_{j}-\gamma,L_{j}+\gamma]\big\}$
is a compact set. Then proterty $\mathcal K_{j}\bigcap\mathcal S=\mathcal
S_{j}$ is an immediate consequence of Proposition \ref{Proposition2}.
\end{proof}

\begin{proposition}\label{Proposition4}
For any trajectory $x(t)$ of the system \eqref{mg} and any given time point $\tau\geqslant0$, let $\mathcal K_{j}$, for some
$j\in\{\onetom\}$, be a compact set as defined in \eqref{Kset}. Then there exist a constant $c_{j}>0$ and an exponent
$v_{j}\in(0,1)$ such that
\begin{align*}
\frac{\Big|\dot{L}\big(x(\tau)\big)\Big|}{\Big\|F\big(x(\tau)\big)\Big\|}
\geqslant c_{j}\Big|L\big(x(\tau)\big)-L_{j}\Big|^{v_{j}},
\end{align*}
for $x(\tau)\in\mathcal K_{j}\setminus \mathcal S$.
\end{proposition}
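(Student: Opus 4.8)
The plan is to combine the {\L}ojasiewicz inequality of Lemma \ref{Loj} applied to the analytic Lyapunov function $L$ with the explicit lower bound on $|\dot L|$ obtained in Proposition \ref{Proposition1}. First I would use the fact that $\mathcal K_j$ is compact (Proposition \ref{Proposition3}) and that $\mathcal K_j\cap\mathcal S=\mathcal S_j$. For each $x_s\in\mathcal S_j$ we have $F(x_s)=0$, hence $\nabla L(x_s)=\mathbf 0$ by \eqref{dLyx}, so $x_s\in\mathcal S_\nabla$ for $H=L$; Lemma \ref{Loj} then gives radii $r(x_s)>0$ and exponents $v(x_s)\in(0,1)$ with $|L(x)-L(x_s)|^{v(x_s)}\leqslant\|\nabla L(x)\|$ on $B_{r(x_s)}(x_s)$. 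Since on $\mathcal S_j$ the value $L(x_s)\equiv L_j$ is constant, these local inequalities all involve the same constant $L_j$. Covering the compact set $\mathcal S_j$ by finitely many such balls $B_{r(x_{s_\ell})}(x_{s_\ell})$, $\ell=1,\dots,q$, and taking $v_j=\max_\ell v(x_{s_\ell})$ (using $0<v_j<1$ and that $|L(x)-L_j|$ is bounded on the compact set $\mathcal K_j$, so a larger exponent only weakens the bound up to a harmless constant), I obtain on the open neighbourhood $\mathcal U=\bigcup_\ell B_{r(x_{s_\ell})}(x_{s_\ell})$ of $\mathcal S_j$ an inequality of the form $|L(x)-L_j|^{v_j}\leqslant c'\,\|\nabla L(x)\|$.

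Next I would relate $\|\nabla L(x)\|$ to $|\dot L(x(\tau))|/\|F(x(\tau))\|$. From \eqref{dLyx}, $\partial_{x_i}L(x(t)) = -\lambda_i g'(\lambda_i x_i(t))[F_i(x(t))-e_i(t)]$, and from the computation in Proposition \ref{Proposition1} with the triggering rule \eqref{PrimaryRule1} in force, $\dot L(x(\tau)) = -(\alpha-\beta\gamma^2)\sum_i |F_i(x(\tau))|^2 = -(\alpha-\beta\gamma^2)\|F(x(\tau))\|^2$; more precisely one also has the pointwise expression $\dot L(x(\tau)) = -\sum_i \lambda_i g'(\lambda_i x_i)[F_i-e_i]F_i$. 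Since $g_i'=g_i(1-g_i)\in(0,1/4]$ is bounded and, on the bounded invariant set $\mathcal B$ from Lemma \ref{Existence} (which contains all trajectories eventually, and contains $\mathcal K_j$), $g_i'(\lambda_i x_i)$ is bounded away from $0$, both $\|\nabla L(x)\|\leqslant C_1\|F(x)\| $ (using $|e_i|\leqslant\gamma\Psi_i$ and the normalization \eqref{normalization}, which gives $\sum_i\Psi_i^2\leqslant\sum_i|F_i|^2$, so $\|e(t)\|\leqslant\gamma\|F(x(t))\|$) and $|\dot L(x)|\geqslant C_2\|F(x)\|^2$ hold with positive constants on $\mathcal K_j$. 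Hence $|\dot L(x(\tau))|/\|F(x(\tau))\| \geqslant C_2\|F(x(\tau))\| \geqslant (C_2/C_1)\|\nabla L(x(\tau))\| \geqslant (C_2/C_1)c'^{-1}|L(x(\tau))-L_j|^{v_j}$, valid for $x(\tau)\in\mathcal U\setminus\mathcal S$.

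To finish, I must handle points of $\mathcal K_j\setminus\mathcal S$ lying outside the neighbourhood $\mathcal U$ of $\mathcal S_j$. On the compact set $\mathcal K_j\setminus\mathcal U$ we have $F(x)\neq 0$ (since $\mathcal K_j\cap\mathcal S=\mathcal S_j\subset\mathcal U$), and by compactness $\|F\|$ is bounded below by some $\rho>0$ there, while $|L(x)-L_j|\leqslant\gamma$ is bounded above; also $g_i'$ is bounded below on $\mathcal K_j$, so $|\dot L(x)|/\|F(x)\|\geqslant C_2\rho>0$ there, which dominates $c_j|L(x)-L_j|^{v_j}$ once $c_j$ is chosen small enough. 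Taking $c_j$ to be the minimum of the constant from the two regions yields the claim on all of $\mathcal K_j\setminus\mathcal S$. The main obstacle is the uniformization step: passing from the purely local, point-dependent {\L}ojasiewicz exponents $v(x_s)$ and radii $r(x_s)$ to a single exponent $v_j$ and constant $c_j$ valid on all of $\mathcal K_j$ — this requires the compactness of $\mathcal S_j$ and $\mathcal K_j$, the finite subcover argument, and care that enlarging the exponent to the common value $v_j<1$ is legitimate because $|L-L_j|$ stays bounded; the remaining estimates are routine bounds using boundedness of $g_i'$ away from $0$ and $\infty$ on the bounded set $\mathcal B$ and the error bound furnished by the triggering rule.
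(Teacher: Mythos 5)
Your proposal is correct and rests on the same three ingredients as the paper's proof: the bound $\|\nabla L(x(\tau))\|\leqslant C_{1}\|F(x(\tau))\|$ obtained from \eqref{dLyx} together with $\|e(t)\|\leqslant\gamma\|F(x(t))\|$ (which follows from the triggering rule and the normalization $\sum_{i}\Psi_{i}^{2}=\sum_{i}|F_{i}|^{2}$), the bound $|\dot{L}(x(\tau))|\geqslant(\alpha-\beta\gamma^{2})\|F(x(\tau))\|^{2}$ from \eqref{dLy0.1}, and a ${\L}$ojasiewicz inequality for $L$. Where you genuinely diverge is in the uniformization step, and your version is the more defensible one. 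The paper invokes a local inequality $\|\nabla L(x)\|\geqslant c(x(\tau))|L(x)-L_{j}|^{v(x(\tau))}$ at each \emph{non-equilibrium} point $x(\tau)\in\mathcal K_{j}\setminus\mathcal S$ (trivially true there since $\nabla L\neq0$ on a small closed ball) and then sets $c_{j}$ and $v_{j}$ by taking $\min_{x(\tau)\in\mathcal K_{j}}c(x(\tau))$ and $\min_{x(\tau)\in\mathcal K_{j}}v(x(\tau))$; this is an infimum over an uncountable family of point-dependent constants, and near $\mathcal S_{j}$ (where $\nabla L\to0$) the local constants $c(x(\tau))$ degenerate, so the positivity of that minimum is exactly what needs proof and is not supplied. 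You instead apply Lemma \ref{Loj} where it is actually stated to hold --- at the critical points $x_{s}\in\mathcal S_{j}$, all sharing the common value $L_{j}$ --- extract a finite subcover of the compact set $\mathcal S_{j}$, pass to a common exponent $v_{j}=\max_{\ell}v(x_{s_{\ell}})$ at the cost of a bounded constant (legitimate since $|L-L_{j}|\leqslant\gamma$ on $\mathcal K_{j}$), and then handle the compact remainder $\mathcal K_{j}\setminus\mathcal U$ by the lower bound $\|F\|\geqslant\rho>0$ there. This is the standard rigorous uniformization (in the spirit of Forti--Tesi) and it closes the gap left in the paper's own write-up; the only price is the extra case split between the neighbourhood of $\mathcal S_{j}$ and its complement.
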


\begin{proof}
Since the notion $t_{k_{i}(\tau)}^{i}$ is simplified as $t_{k}^{i}$ where $k_{i}(\tau)=\arg\max_{k'}\{t^{i}_{k'}\leqslant\tau\}$, the following equation
\begin{align*}
F_{i}\big(x(\tau)\big)=-d_{i}x_{i}\big(t^{i}_{k_{i}(\tau)}\big)-\bigg[\nabla
f\Big(y\big(t^{i}_{k_{i}(\tau)}\big)\Big)\bigg]_{i}+\theta_{i},
\end{align*}
can be rewritten as
\begin{align*}
F_{i}\big(x(\tau)\big)=-d_{i}x_{i}(t_{k}^{i})-\Big[\nabla f\big(y(t_{k}^{i})\big)\Big]_{i}+\theta_{i},
\end{align*}
for $i=\oneton$. From Eq. \eqref{dLyx} and the condition \eqref{PrimaryRule1}, we have
\begin{align*}
 &~\Big\|\nabla L\big(x(\tau)\big)\Big\|^{2}\nonumber\\
=&\sum_{i=1}^{n}\bigg|\frac{\partial}{\partial x_i}L\big(x(\tau)\big)\bigg|^{2}\nonumber\\
=&\sum_{i=1}^{n}\bigg|\lambda_{i}g'\big(\lambda_{i}x_{i}(\tau)\big)\Big[F_{i}\big(x(\tau)\big)-e_{i}(\tau)\Big]\bigg|^{2}\nonumber\\[2pt]
\leqslant&
  ~\beta_{j}^{2}\sum_{i=1}^{n}
  \Bigg[
   F_{i}^{2}\big(x(\tau)\big)
  +e_{i}^{2}\big(x(\tau)\big)
  +2\Big|F_{i}\big(x(\tau)\big)e_{i}\big(x(\tau)\big)\Big|
  \Bigg]\nonumber\\[2pt]
\leqslant&
  ~\beta_{j}^{2}\sum_{i=1}^{n}
  \Bigg[(1+c)F_{i}^{2}\big(x(\tau)\big)+\bigg(1+\frac{1}{c}\bigg)e_{i}^{2}\big(x(\tau)\big)\Bigg]\nonumber\\[2pt]
\leqslant&
  ~\beta_{j}^{2}(1+c)\sum_{i=1}^{n}\Big|F_{i}\big(x(\tau)\big)\Big|^{2}+
  ~\beta_{j}^{2}\bigg(1+\frac{1}{c}\bigg)\gamma^{2}\sum_{i=1}^{n}\Psi_{i}^{2}(t)\nonumber\\[2pt]
=&~\beta_{j}^{2}(1+c)\Bigg[
  \sum_{i=1}^{n}\Big|F_{i}\big(x(\tau)\big)\Big|^{2}
 +\frac{\gamma^{2}}{c}\sum_{i=1}^{n}\Big(\delta(t)\Big)^{2}{\rm e}^{-2d_{i}(t-t_{k}^{i})}\Bigg]\nonumber\\[2pt]
=&~\beta_{j}^{2}(1+c)\bigg(1+\frac{\gamma^{2}}{c}\bigg)\sum_{i=1}^{n}\Big|F_{i}\big(x(\tau)\big)\Big|^{2}\nonumber\\
=&~\beta_{j}^{2}(1+c)\bigg(1+\frac{\gamma^{2}}{c}\bigg)\Big\|F\big(x(\tau)\big)\Big\|^{2},
\end{align*}
where $\beta_{j}=\max_{i\in\{\oneton\}}\max_{x(\tau)\in\mathcal K_{j}}\{\lambda_{i}g'_{i}(\lambda_{i}x_{i}(\tau))\}$.
Then it holds
\begin{align*}
\Big\|F\big(x(\tau)\big)\Big\|\geqslant h_{j}\Big\|\nabla L\big(x(\tau)\big)\Big\|.
\end{align*}
where
\begin{align*}
h_{j}=\frac{1}{\beta_{j}\sqrt{(1+c)\big(1+\frac{\gamma^{2}}{c}\big)}}.
\end{align*}
From Eq. \eqref{dLy0.1}, we have
\begin{align*}
\Big|\dot{L}\big(x(\tau)\big)\Big|\geqslant\big(\alpha-\beta\gamma^{2}\big)\Big\|F\big(x(\tau)\big)\Big\|^{2}.
\end{align*}

For the point $x(\tau)\in\mathcal K_{j}\setminus \mathcal S$, from Eq. \eqref{dLyx},
$\nabla L(x(\tau))\neq0$. There exists
$r(x(\tau))>0$, $c(x(\tau))>0$ and an exponent
$v(x(\tau))\in(0,1)$ such that
\begin{align*}
\Big\|\nabla L\big(x(\tau)\big)\Big\|\geqslant c\big(x(\tau)\big)\Big|L\big(x(\tau)\big)-L_{j}\Big|^{v(x(\tau))},
\end{align*}
for $x\in B_{r(x(\tau))}(x(\tau))$. Indeed, if
$r(x(\tau))>0$ is small, we have $\nabla L(x)\neq0$ for
$x\in\overline{B_{r(x(\tau))}(x(\tau))}$. Therefore, it holds
\begin{align*}
\frac{\Big|\dot{L}\big(x(\tau)\big)\Big|}{\Big\|F\big(x(\tau)\big)\Big\|}
&\geqslant\big(\alpha-\beta\gamma^{2}\big)\Big\|F\big(x(\tau)\big)\Big\|\\
&\geqslant\big(\alpha-\beta\gamma^{2}\big)h_{j}\Big\|\nabla L\big(x(\tau)\big)\Big\|\\[3pt]
&\geqslant\big(\alpha-\beta\gamma^{2}\big)h_{j}c\big(x(\tau)\big)\Big|L\big(x(\tau)\big)-L_{j}\Big|^{v(x(\tau))}\\[3pt]
&\geqslant~c_{j}\Big|L\big(x(\tau)\big)-L_{j}\Big|^{v_{j}},
\end{align*}
where
\begin{align*}
c_{j}=(\alpha-\beta\gamma^{2})\,h_{j}\min_{x(\tau)\in\mathcal K_{j}}\Big\{c\big(x(\tau)\big)\Big\}
\end{align*}
and
\begin{align*}
v_{j}=\min_{x(\tau)\in\mathcal K_{j}}\Big\{v\big(x(\tau)\big)\Big\}
\end{align*}
for $x(\tau)\in \mathcal K_{i}\setminus \mathcal S$.
\end{proof}

Now, we are at the stage to prove that the length of $x(t)$ on $[0,+\infty)$ is finite. The statement proposition is given as follow.
\begin{proposition}\label{Proposition5}
Any trajectory $x(t)$ of the systm \eqref{mg} has a finite length on $[0,+\infty)$, i.e.,
\begin{align*}
l_{[0,+\infty)} =\int_{0}^{+\infty}\big\|\dot{x}(s)\big\|ds
=\lim_{t\rightarrow+\infty}\int_{0}^{t}\big\|\dot{x}(s)\big\|ds<+\infty.\\[-10pt]
\end{align*}
\end{proposition}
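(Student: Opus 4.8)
The plan is to run the classical Łojasiewicz finite‑length argument of \cite{Mfa,Tpc}, but now driven by the strict Lyapunov function $L$ of Proposition~\ref{Proposition1} and the modified Łojasiewicz estimate of Proposition~\ref{Proposition4}, which already absorbs the measurement‑error terms produced by the event‑triggered feedback. Fix a trajectory $x(t)$. By Lemma~\ref{Existence} it is bounded and eventually confined to a bounded ball, and by Proposition~\ref{Proposition1} the map $t\mapsto L(x(t))$ is nonincreasing; since $L$ is continuous, hence bounded below on that ball, $L(x(t))$ decreases to a finite limit $L_\infty$.

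Next I would identify $L_\infty$. A LaSalle‑type argument shows the $\omega$‑limit set $\omega(x)$ is nonempty, compact, invariant and contained in the level set $\{L=L_\infty\}$, so $\dot L\equiv 0$ on it; by strictness of $L$ this forces $\omega(x)\subseteq\mathcal S$, hence $L_\infty$ equals one of the finitely many energy levels of Proposition~\ref{Proposition2}, say $L_\infty=L_j$, and $\omega(x)\subseteq\mathcal S_j$. Boundedness then gives $\mathrm{dist}(x(t),\mathcal S_j)\to 0$ as $t\to+\infty$ (else a subsequence of $x(t)$ would converge to a point of $\omega(x)$ at positive distance from $\mathcal S_j$). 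Combining this with $L(x(t))\downarrow L_j$, there is $T\geqslant 0$ such that $\mathrm{dist}(x(t),\mathcal S_j)\leqslant\varepsilon$ and $L(x(t))\in[L_j,L_j+\gamma]$ for all $t\geqslant T$, i.e. $x(t)\in\mathcal K_j$ for $t\geqslant T$, where $\mathcal K_j$ is as in \eqref{Kset}. If $L(x(t_0))=L_j$ at some finite $t_0\geqslant T$, monotonicity gives $L(x(t))\equiv L_j$, hence $\dot L(x(t))=0$, hence $F(x(t))=0$ by \eqref{dLy0.1}, so $\dot x(t)\equiv 0$ for $t\geqslant t_0$ and the length is trivially finite; so assume $L(x(t))-L_j>0$ for all $t$.

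On $[T,+\infty)$ put $\phi(t)=\big(L(x(t))-L_j\big)^{1-v_j}$, which is continuous and nonincreasing. Using $\|\dot x(t)\|=\|F(x(t))\|$, $\dot L(x(t))\leqslant 0$, and Proposition~\ref{Proposition4} (whose constants $c_j,v_j$ are uniform over $\mathcal K_j$), one gets, for a.e. $t\geqslant T$,
\begin{align*}
-\dot\phi(t)=(1-v_j)\big(L(x(t))-L_j\big)^{-v_j}\big|\dot L(x(t))\big|\geqslant (1-v_j)\,c_j\,\big\|F(x(t))\big\|=(1-v_j)\,c_j\,\big\|\dot x(t)\big\|.
\end{align*}
Integrating and using $1-v_j>0$, $L(x(t))-L_j\geqslant 0$,
\begin{align*}
\int_T^{t}\big\|\dot x(s)\big\|ds\leqslant\frac{\phi(T)}{(1-v_j)c_j}=\frac{\big(L(x(T))-L_j\big)^{1-v_j}}{(1-v_j)c_j}\qquad\text{for all }t\geqslant T,
\end{align*}
so letting $t\to+\infty$ yields $\int_T^{+\infty}\|\dot x(s)\|ds<+\infty$. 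Since $x(t)$ is bounded and $y(t)\in(0,1)^n$, the integrand $\|\dot x(t)\|=\|F(x(t))\|$ is bounded on $[0,T]$, hence $\int_0^T\|\dot x(s)\|ds<+\infty$, and adding the two pieces gives $l_{[0,+\infty)}<+\infty$.

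The main obstacle is the identification/confinement step: proving that the trajectory eventually enters and stays inside a single compact piece $\mathcal K_j$, which rests on the LaSalle argument (the $\omega$‑limit set lies in $\mathcal S$ at a common energy level) together with $\mathrm{dist}(x(t),\mathcal S_j)\to 0$; here the separation $L_j>L_{j+1}+2\gamma$ of the finitely many energy levels from Proposition~\ref{Proposition2} is exactly what prevents $x(t)$ from wandering between neighbourhoods of different equilibrium components, and the compactness‑uniformized $c_j,v_j$ of Proposition~\ref{Proposition4} are what let the differential‑inequality argument work on all of $[T,+\infty)$ at once. One should also note that $x(t)$ is only piecewise smooth (the vector field jumps at the triggering instants), but $x(t)$ and $L(x(t))$ stay continuous (indeed locally Lipschitz), so all the integral estimates above are unaffected.
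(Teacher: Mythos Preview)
Your proposal is correct and follows essentially the same route as the paper's own proof: confine the trajectory to a single compact piece $\mathcal K_j$ via the LaSalle argument and Proposition~\ref{Proposition2}, then invoke the uniform \L ojasiewicz-type estimate of Proposition~\ref{Proposition4} and integrate the resulting differential inequality to bound the tail of $\int\|\dot x\|\,ds$. The only cosmetic differences are that the paper assumes without loss of generality that $x(0)\notin\mathcal S$ (so $x(t)\notin\mathcal S$ for all $t$) rather than splitting off the case $L(x(t_0))=L_j$, and it performs the change of variable $\sigma=L(x(s))$ directly instead of introducing your auxiliary function $\phi(t)=(L(x(t))-L_j)^{1-v_j}$; these are equivalent presentations of the same computation.
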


\begin{proof}
Assume without loss of generality that $x(0)$ is not an equilibrium point
of Eq. \eqref{mg}. Due to the uniqueness of solutions, we have
$\dot{x}(t)=F(x(t))\neq 0$ for $t\geqslant 0$, i.e.,
$x(t)\in\R^{n}\setminus \mathcal S$ for $t\geqslant 0$. From Proposition \ref{Proposition1}, it is seen that $L(x(t))$ satisfies
$\dot{L}(x(t))<0$ for $t\geqslant0$, i.e., $L(x(t))$
strictly decreases for $t\geqslant0$. Thus, since $x(t)$ is bounded on
$[0,+\infty)$ and $L(x(t))$ is continuous, $L(x(t))$ will
tend to a finite value $L(+\infty)=\lim_{t\to+\infty}L(x(t))$. From
Proposition \ref{Proposition1} and the LaSalle invariance principle \cite{Mhi},
\cite{Jkh}, it also follows that $x(t)\to\mathcal S~(t\to+\infty)$. Thus,
from the continuity of $L$, it results $L(+\infty)=L_{j}$ for some
$j\in\{\onetom\}$ and $x(t)\to \mathcal S_{j}~(t\to+\infty)$.

Since $x(t)\to \mathcal S_{j}~(t\to+\infty)$ and $L(x(t))\to
L_{j}~(t\to+\infty)$, it follows that there exists $\widetilde{t}>0$ such
that $x(t)\in\mathcal K_{i}$ for $t\geqslant\widetilde{t}$.
By using Proposition \ref{Proposition4}, considering that
$x(t)\in\R^{n}\setminus\mathcal S$ for $t\geqslant 0$ and $x(t)\in\mathcal
K_{i}$ for $t\geqslant\widetilde{t}$, we have that there exists $c_{j}>0$
and $v_{j}\in(0,1)$ such that
\begin{align*}
\frac{\Big|\dot{L}\big(x(t)\big)\Big|}{\Big\|F\big(x(t)\big)\Big\|}
=\frac{-\dot{L}\big(x(t)\big)}{\Big\|F\big(x(t)\big)\Big\|}
\geqslant c_{j}\Big|L\big(x(t)\big)-L(+\infty)\Big|^{v_{j}},
\end{align*}
for $t\geqslant\widetilde{t}$. Then
\begin{align*}
\int_{\widetilde{t}}^{t}\big\|\dot{x}(s)\big\|ds
&=\int_{\widetilde{t}}^{t}\Big\|F\big(x(s)\big)\Big\|ds\\
&\leqslant\frac{1}{c_{j}}
\int_{\widetilde{t}}^{t}\frac{-\dot{L}\big(x(s)\big)}{\Big|L\big(x(s)\big)-L(+\infty)\Big|^{v_{j}}}ds.
\end{align*}
The change of variable $\sigma=L(x(s))$ derives
\begin{align*}
\int_{\widetilde{t}}^{t}\big\|\dot{x}(s)\big\|ds
\leqslant&\,\frac{1}{c_{j}}\int_{L(x(\widetilde{t}))}^{L(x(t))}-\frac{1}{\big|\sigma-L(+\infty)\big|^{v_{j}}}d\sigma\\
=&\,\frac{1}{c_{j}(1-v_{j})}\Bigg\{\Big[L\big(x(\widetilde{t})\big)-L(+\infty)\Big]^{1-v_{j}}\\
 &-\Big[L\big(x(t)\big)-L(+\infty)\Big]^{1-v_{j}}\Bigg\}\\
\leqslant&\,\frac{1}{c_{j}(1-v_{j})}\Big[L\big(x(\widetilde{t})\big)-L(+\infty)\Big]^{1-v_{j}},
\end{align*}
for $t\geqslant\widetilde{t}$. Therefore, we have
\begin{align*}
l_{[0,+\infty)}
&=\int_{0}^{+\infty}\big\|\dot{x}(s)\big\|ds\\
&\leqslant\int_{0}^{\widetilde{t}}\big\|\dot{x}(s)\big\|ds+\int_{\widetilde{t}}^{+\infty}\big\|\dot{x}(s)\big\|ds\\
&\leqslant\int_{0}^{\widetilde{t}}\big\|\dot{x}(s)\big\|ds+\frac{\Big[L\big(x(\widetilde{t})\big)-L(+\infty)\Big]^{1-v_{j}}}{c_{j}(1-v_{j})}\\
&<+\infty.
\end{align*}
This completes the proof of Proposition \ref{Proposition5}.
\end{proof}

In what follows it remains to address the proof of Theorem \ref{PrimaryRule}, which is given in Section \ref{Primary}.

\textit{Proof of Theorem \ref{PrimaryRule}:}
Suppose that the condition \eqref{PrimaryRule1} holds. Then from Proposition \ref{Proposition5}, for any trajectory $x(t)$ of the system \eqref{mg}, we have
\begin{align*}
l_{[0,+\infty)}
=\int_{0}^{+\infty}\big\|\dot{x}(s)\big\|\,{\rm d}s
=\lim_{t\rightarrow+\infty}\int_{0}^{t}\big\|\dot{x}(s)\big\|\,{\rm d}s<+\infty.
\end{align*}
From Cauchy criterion on limit existence, for any $\varepsilon>0$,
there exists $T(\varepsilon)$ such that when $t_{2}>t_{1}>T(\varepsilon)$, it results
$\int_{t_{1}}^{t_{2}}\big\|\dot{x}(s)\big\|ds<\varepsilon$. Thus,
\begin{align*}
\Big\|x(t_{1})-x(t_{2})\Big\|
=\bigg\|\int_{t_{1}}^{t_{2}}\dot{x}(s)\,{\rm d}s\bigg\|
\leqslant\int_{t_{1}}^{t_{2}}\big\|\dot{x}(s)\big\|\,{\rm d}s
<\varepsilon.
\end{align*}
It follows that there exists an equilibrium point $x^{\bm*}$ of \eqref{mg}, such that $\lim_{t\rightarrow+\infty}x(t)=x^{\bm*}$. Recalling the Definition \ref{convergence}, we can obtain that system \eqref{mg} is convergence.\QED

\begin{remark}
The event-triggered condition \eqref{PrimaryRule1} implies that the next time interval for neuron $v_{i}$ depends on states of the neurons $v_{j}$ that are synaptically linked to neuron $v_{i}$
We say that neuron $v_{j}$ is synaptically linked to neuron $v_{i}$ if $[\nabla f(y)]_{i}$ depends on $y_{j}$, in other words,
\begin{align*}
\frac{\partial^{2} f(y)}{\partial y_{i}\partial y_{j}}\ne 0.
\end{align*}
\end{remark}

{\color{blue}
It seems naturally that when the event triggers, the neuron $v_{i}$ has to send its current state information $x_{i}(t)$ to its out-neighbours immediately in order to avoid having $\frac{\rm d}{{\rm d}t}L(x(t))>0$. However, such a trigger would have the following problems:
\begin{enumerate}
\renewcommand{\labelenumi}{(P\arabic{enumi})}
\item\label{P1} The triggering function $T_{i}(e_{i},t)=0$ may hold even after neuron $v_{i}$ sends its new state to its neighbours. A bad situation is that $\Psi(t)=0$ happens at the same time when $|e_{i}(t)|=0$. This may cause the neuron to send its state continuously. This is called {\it continuous triggering situation} in the Zeno behavior \footnote{~Zeno behavior is described as a system making an infinite number of jumps (i.e. triggering events in this paper) in a finite amount of time (i.e. a finite time interval in this paper), see \cite{Joh}.}.
\item\label{P2} Event if $\Psi(t)=0$ and $|e_{i}(t)|=0$ never happen at the same time point. The Zeno behavior may still exist. For example, one neuron $v_{i}$ broadcasting its new state to its out-neighbours may cause the triggering rules for two neurons $v_{j_{1}}$ and $v_{j_{2}}$ in $N_{i}^{\text{out}}$ are broken alternately. That is to say, the inter-event time for both $v_{j_{1}}$ and $v_{j_{2}}$ will decrease to zero. This is called {\it alternate triggering situation} in the Zeno behavior.
\end{enumerate}
These observations motivate us to introduce the Morse-Sard Theorem for avoiding the {\it continuous triggering situation} (P\ref{P1}) in Subsection \ref{MorseSardTheorem}. In Subsection \ref{ZenoBehavior}, we will also prove that for all the neuron $v_{i}~(i=\oneton)$, the {\it alternate triggering situation} is absent by using our distributed event-triggered rule in Theorem \ref{PrimaryRule}.
}

\subsection{Exclusion of Continuous Triggering Situation }\label{MorseSardTheorem}
From the rule \eqref{PrimaryRule1}, we know that a triggering event happens at a threshold time $t_{k}^{i}$ satisfying
\begin{align*}
T_{i}(e_{i},t_{k}^{i})=\big|e_{i}(t_{k}^{i})\big|-\gamma\Psi_{i}(t_{k}^{i})=0
\end{align*}
for $i=\oneton$ and $k=\zerotoinfty$.

To avoid the situation that $\Psi_{i}(t)=0$ and $|e_{i}(t)|=0$ happen at the same triggering time point $t_{k}^{i}$ for some $k$, when the triggering function $T_{i}(e_{i},t)=0$ still holds after the neuron $v_{i}$ sends the new state to its neighbours, we define a function vector
\begin{align*}
S\big(t,t_{\tau}\big)=\frac{1}{2}\Big[e^{\top}(t)e(t)-\gamma^{2}\Psi^{\top}(t)\Psi(t)\Big]
\end{align*}
where
\begin{align}\label{ThresholdTime}
t_{\tau}\in\bigcup_{i=1}^{n}\Big\{t^{i}_k:t_{k}^{i}\leqslant t \text{~and~} k=\zerotoinfty\Big\}
\end{align}
is one of the triggering time points before the present time $t$ and $S(t,t_{\tau})=\big[S_{1}(t,t_{\tau}),\cdots,S_{n}(t,t_{\tau})\big]^{\top}$. The following Morse-Sard theorem will be used for excluding this continuous triggering.

\begin{theorem}[Morse-Sard]\label{MSTheorem}
For each initial data $x(0)$, there exists a measure zero subset $\mathcal O\subset\R^{n}$ such that for any given neuron $v_{i}~(i=\oneton)$, the threshold time $t_{\tau}$ for $S(t_{k+1}^{i},t_{\tau})=0$ corresponding to initial data $x(0)\in\R^{n}\backslash\mathcal O$ are countable for all $k=\zerotoinfty$. That is to say, the triggering time point set
\begin{align*}
\bigcup_{i=1}^{n}\bigcup_{k=0}^{+\infty}\big\{t_{k}^{i}\big\}
\end{align*}
is a countable set.
\end{theorem}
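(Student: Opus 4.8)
The plan is to invoke the classical Morse--Sard theorem, which states that if $\Phi:\R^{n}\to\R$ is a $C^{\infty}$ (or sufficiently smooth) map, then the set of critical values of $\Phi$ has Lebesgue measure zero in $\R$. The key observation is that, for fixed indices $i$ and $k$, the map sending the initial datum $x(0)$ to the value of the threshold function $S_{i}(t,t_{\tau})$ is real-analytic, so we are in a position to treat the vanishing of $S_{i}$ as a regular-value problem. Concretely, I would first set up the dependence of the trajectory on the initial condition: by Lemma \ref{Existence} the piecewise Cauchy problem \eqref{mg} has a unique solution for each $x(0)$, and on each inter-event interval the right-hand side $F$ is a composition of the analytic data $f$, $g$ and affine maps, hence the flow $x(0)\mapsto x(t;x(0))$ is real-analytic in $x(0)$ (and jointly in $t$ on each interval). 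Consequently $e(t)$, $\Psi(t)$ and therefore each $S_{i}(t,t_{\tau})$ are real-analytic in $x(0)$.

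Second, I would argue that for a fixed neuron $v_{i}$ and a fixed candidate triggering index, the condition $S_{i}(t_{k+1}^{i},t_{\tau})=0$ defines a level set of an analytic function of $x(0)$, and I want to show that $0$ is attained as a critical value only for $x(0)$ in a measure-zero set. This is where Morse--Sard enters: apply it to the analytic map $x(0)\mapsto S_{i}$, obtain a measure-zero set of critical values in $\R$; the preimage structure then forces the zero set, intersected with the complement of a measure-zero ``bad'' set $\mathcal O_{i,k}$, to be a proper analytic subvariety, hence to intersect any trajectory in only countably (indeed, locally finitely) many threshold times $t_{\tau}$. Third, I would take the countable union $\mathcal O=\bigcup_{i=1}^{n}\bigcup_{k=0}^{+\infty}\mathcal O_{i,k}$, which remains measure zero, and conclude that for $x(0)\in\R^{n}\setminus\mathcal O$ the set of threshold times solving $S(t_{k+1}^{i},t_{\tau})=0$ is countable for every $i$ and $k$; since a countable union of countable sets is countable, $\bigcup_{i}\bigcup_{k}\{t_{k}^{i}\}$ is countable.

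The main obstacle I anticipate is making precise the reduction from ``the zero set of an analytic function of $x(0)$'' to ``countably many threshold times along a trajectory.'' The subtlety is that $S_{i}$ depends on \emph{two} time arguments $t$ and $t_{\tau}$ and implicitly on all the previous triggering times, which themselves depend on $x(0)$; so one must be careful that the recursion does not destroy analyticity or introduce an uncountable family of constraints. The clean way to handle this is to proceed inductively on $k$: assuming $t_{k}^{i}(x(0))$ depends analytically (piecewise) on $x(0)$ off a measure-zero set, show the implicit equation $|e_{i}(t)|=\gamma\Psi_{i}(t)$ has, generically in $x(0)$, only isolated solutions in $t$, using the analyticity in $t$ together with the exclusion (via Morse--Sard applied in the $x(0)$ variable) of the degenerate case where $e_{i}$ and $\Psi_{i}$ vanish simultaneously and to the same order. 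An analytic function of one real variable that is not identically zero has isolated zeros, so once the simultaneous-degenerate-vanishing case is excluded on a measure-zero set, local finiteness of triggering times on any bounded interval follows, and countability on $[0,+\infty)$ is immediate. Establishing that the degenerate locus in $x(0)$-space has measure zero is exactly the content secured by Morse--Sard, and organizing the induction so that the accumulated exceptional set stays measure zero is the delicate bookkeeping step.
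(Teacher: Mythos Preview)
Your invocation of Morse--Sard is miscast. The classical Morse--Sard theorem for a smooth map $\Phi:\R^{n}\to\R$ asserts that the set of \emph{critical values} in the codomain $\R$ has Lebesgue measure zero; it says nothing about a measure-zero set of \emph{initial data} in the domain $\R^{n}$. In your second paragraph you apply Morse--Sard to $x(0)\mapsto S_i$ and obtain a null set of critical values, but then jump to ``a measure-zero `bad' set $\mathcal O_{i,k}$'' in $x(0)$-space without explaining the passage. Whether $0$ is a regular value of this particular map is a fixed fact, not something that holds for ``most'' $x(0)$; if $0$ happens to be critical, Morse--Sard gives you nothing. What you actually need in order to exclude the degenerate case you identify in your third paragraph is the elementary fact that the zero set of a non-identically-zero real-analytic function of $x(0)$ has Lebesgue measure zero---this, not Morse--Sard, is what can pin down $\mathcal O_{i,k}$.

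The paper's route is different and more concrete: it differentiates $S_i(t,t_\tau)$ with respect to the \emph{time variables} $(t,t_\tau)$, evaluates immediately after a trigger (where $e_i$ has reset to zero), and obtains
\[
\lim_{\varepsilon\to 0}\Bigl[\tfrac{\partial S_i}{\partial t},\ \tfrac{\partial S_i}{\partial t_\tau}\Bigr]\Big|_{t=t_{k+1}^{i}+\varepsilon}
=\bigl[\gamma^{2}d_i\Psi_i^{2}(t_{k+1}^{i}),\ -\gamma^{2}d_i\Psi_i^{2}(t_{k+1}^{i})\bigr].
\]
This row is nonzero whenever $\Psi_i(t_{k+1}^i)\neq 0$, i.e.\ whenever $\delta(t_{k+1}^i)\neq 0$. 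The bad set is then \emph{defined explicitly} as $\mathcal O_k^i=\{x_i(0):\delta_k^{j}=0\text{ for all }j\}$, asserted to have measure zero, and for $x(0)$ outside $\mathcal O=\bigcup_{i,k}\mathcal O_k^i$ the \emph{inverse function theorem} (not Morse--Sard) is invoked to conclude that triggering times are isolated, hence countable. So despite the theorem's name, the paper never actually uses the Morse--Sard theorem; your attempt to take the name literally led you to the wrong tool. Your overall skeleton---analyticity in $t$ gives isolated zeros once a degenerate set of initial data is removed, induct on $k$, take the countable union---is sound, but the mechanism producing the measure-zero exceptional set must be replaced.
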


\begin{proof}
To show that the threshold time $t_{\tau}$ are countable for each $x(0)\in\R^{n}\backslash\mathcal O$, we prove a statement that the Jacobian matrix ${\rm d}S(t,t_{\tau})=\big[{\rm d}S_{1}(t,t_{\tau}),\cdots,{\rm d}S_{n}(t,t_{\tau})\big]^{\top}$ has rank $n$ at next triggering time point $t=t_{k+1}^{i}$, where
\begin{align*}
{\rm d}S_{i}\big(t,t_{\tau}\big)=
\bigg[
\frac{\partial}{\partial t}S_{i}\big(t,t_{\tau}\big),
\frac{\partial}{\partial t_{\tau}}S_{i}\big(t,t_{\tau}\big)
\bigg].
\end{align*}
The two components of the above equation satisfy
\begin{align*}
 \frac{\partial}{\partial t}S_{i}\big(t,t_{\tau}\big)
=&~e_{i}(t)\frac{{\rm d}e_{i}(t)}{{\rm d}t}-\gamma^{2}\Psi_{i}(t)\frac{{\rm d}\Psi_{i}(t)}{{\rm d}t}\\
=&~e_{i}(t)\frac{{\rm d}}{{\rm d}t}\Big[\nabla f\big(y(t)\big)\Big]_{i}
  +\gamma^{2}d_{i}\delta_{k}^{2}\,{\rm e}^{-2d_{i}(t-t_{k}^{i})}\\
=&~e_{i}(t)\frac{{\rm d}}{{\rm d}t}\Big[\nabla f\big(y(t)\big)\Big]_{i}
  +\gamma^{2}d_{i}\Psi_{i}^{2}(t)
\end{align*}
and
\begin{align*}
  \frac{\partial}{\partial t_{\tau}}S_{i}\big(t,t_{\tau}\big)
=&~e_{i}(t)\frac{{\rm d}e_{i}(t)}{{\rm d}t_{\tau}}
  -\gamma^{2}\Psi_{i}(t)\frac{{\rm d}\Psi_{i}(t)}{{\rm d}t_{\tau}}\\
=&-e_{i}(t)\frac{{\rm d}}{{\rm d}t_{\tau}}\Big[\nabla f\big(y(t_{\tau})\big)\Big]_{i}
  -\gamma^{2}d_{i}\delta_{k}^{2}\,{\rm e}^{-2d_{i}(t-t_{k}^{i})}\\
=&-e_{i}(t)\frac{{\rm d}}{{\rm d}t_{\tau}}\Big[\nabla f\big(y(t_{\tau})\big)\Big]_{i}
  -\gamma^{2}d_{i}\Psi_{i}^{2}(t)
\end{align*}
When event triggers and $e_{i}(t)$ resets to $0$ in the short time period after the next time point $t_{k+1}^{i}$, that is, $e_{i}(t_{k+1}^{i}+\epsilon)\to0$ when $\varepsilon\to0$, then it follows
\begin{align*}
 \lim_{\varepsilon\to0}
 \frac{\partial}{\partial t}S_{i}\big(t,t_{\tau}\big)\bigg|_{t=t_{k+1}^{i}+\varepsilon}
=\gamma^{2}d_{i}\Psi_{i}^{2}(t_{k+1}^{i})
\end{align*}
and
\begin{align*}
  \lim_{\varepsilon\to0}
  \frac{\partial}{\partial t_{\tau}}S_{i}\big(t,t_{\tau}\big)\bigg|_{t=t_{k+1}^{i}+\varepsilon}
=-\gamma^{2}d_{i}\Psi_{i}^{2}(t_{k+1}^{i}).
\end{align*}
Define a initial data set for neuron $v_{i}$ by
\begin{align*}
\mathcal O_{k}^{i}=\Big\{x_{i}(0)\in\R:\delta_{k}^{j}=0,\text{~for all~}j=\oneton\Big\}
\end{align*}
and it holds $\measure(\mathcal O_{k}^{i})=0$ in the sense of Lebesgue measure. Take the initial data $x_{i}(0)$ from $\R\backslash\mathcal O_{k}^{i}$, we have
\begin{align*}
\delta\big(t_{k+1}^{i}\big)=\frac
{\sqrt{\sum\limits_{j=1}^{n}\Big|F_{j}\big(x(t_{k+1}^{i})\big)\Big|^{2}}}
{\sqrt{\sum\limits_{j=1}^{n}{\rm e}^{-2d_{j}(t_{k+1}^{i}-t_{k}^{j})}}}
~\neq0
\end{align*}
that is,
\begin{align*}
\Psi_{i}\big(t_{k+1}^{i}\big)=\delta\big(t_{k+1}^{i}\big)\,{\rm e}^{-d_{i}(t_{k+1}^{i}-t_{k}^{i})}\neq0
\end{align*}
which implies
\begin{align*}
{\rm d}S_{i}\big(t_{k+1}^{i},t_{\tau}\big)
=&\lim_{\varepsilon\to0}{\rm d}S_{i}\big(t_{k+1}^{i}+\varepsilon,t_{\tau}\big)\\
=&\lim_{\varepsilon\to0}\bigg[
  \frac{\partial}{\partial t}S_{i}\big(t,t_{\tau}\big),
  \frac{\partial}{\partial t_{\tau}}S_{i}\big(t,t_{\tau}\big)
  \bigg]\bigg|_{t=t_{k+1}^{i}+\varepsilon}\\
=&\Big[\gamma^{2}d_{i}\Psi_{i}^{2}(t_{k+1}^{i}),-\gamma^{2}d_{i}\Psi_{i}^{2}(t_{k+1}^{i})\Big]\\[2pt]
\neq&0
\end{align*}
thus, for each initial data $x(0)\in\R^{n}\backslash\mathcal O$ with
\begin{align}\label{ZeroMeasuredSet}
\mathcal O=\bigcup_{k=1}^{\infty}\bigcup_{i=1}^{n}\mathcal O_{k}^{i}
\text{ ~and~ }
\measure(\mathcal O)=0,
\end{align}
the Jacobian matrix ${\rm d}S(t,t_{\tau})$ has rank $n$ at time $t=t_{k+1}^{i}$.

Now using the inverse function theorem at each $x(0)\in\R^{n}\backslash\mathcal O$, we can obtain that
for each threshold time $t_{\tau}$ defined in Eq. \eqref{ThresholdTime},
the next triggering time point $t_{k+1}^{i}$ is isolated, hence the set
\begin{align*}
\bigcup_{i=1}^{n}\bigcup_{k=0}^{+\infty}\big\{t_{k}^{i}\big\}
\end{align*}
is a countable set. The Morse-Sard theorem is proved.

\end{proof}

Recalling the triggering function $T_{i}(e_{i},t)$, we can obtain the results that if the initial data $x(0)\in\R^{n}\backslash\mathcal O$, then
\begin{align*}
\Psi(t_{\tau})\neq0
\text{~ for all ~}
t_{\tau}\in\bigcup_{i=1}^{n}\bigcup_{k=0}^{+\infty}\big\{t_{k}^{i}\big\}
\end{align*}
that is to say, $\Psi_{i}(t)=0$ and $|e_{i}(t)|=0$ may never happen at the same time at all the triggering time point $t_{k}^{i}$ where $i=\oneton$ and $k=\zerotoinfty$. Therefore, the {\it continuous triggering situation} in the Zeno behavior (P\ref{P1}) is avoided.

\begin{remark}
To refrain $x(0)$ from the zero measured subset $\mathcal O$, a small perturbation on initial data $x(0)$ can be introduced, which can make it be away from the value that leads to $x(0)\in\mathcal O$. The small perturbation on initial data has no influence on the convergence of the system, for the equilibria of the system \eqref{mg} do not depend on the initial data  sensitively.
\end{remark}

\subsection{Exclusion of Alternate Triggering Situation}\label{ZenoBehavior}
After we exclude the {\it continuous triggering situation} in the above section, what remains is  the {\it alternate triggering situation} in the Zeno behavior. To prove that this situation is absent when using the distributed event-triggered rule \eqref{PrimaryRule1}, we will find a common positive lower-bound for all the inter-event time $t_{k+1}^{i}-t_{k}^{i}$, where $i=\oneton$ and $k=\zerotoinfty$.
\begin{theorem}\label{Zeno}
Let $\mathcal O$ be a zero measured set as defined in \eqref{ZeroMeasuredSet}. Under the distributed event-triggered rule \eqref{PrimaryRule1} in Theorem \ref{PrimaryRule}, for each $x(0)\in\R^{n}\backslash\mathcal O$, the next inter-event interval of every neuron is strictly positive and has a common positive lower-bound. Furthermore,
the {\it alternate triggering situation} in the Zeno behavior is excluded.
\end{theorem}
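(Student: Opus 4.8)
The plan is to produce a single constant $\tau^{\ast}>0$, depending only on the network data $(D,\theta,f,g,\Lambda,\gamma)$, with $t_{k+1}^{i}-t_{k}^{i}\ge\tau^{\ast}$ for all $i=\oneton$ and all $k=\zerotoinfty$. This is exactly what is needed: strict positivity of every next inter-event interval is immediate, and the \emph{alternate triggering situation} (P\ref{P2}) is ruled out, since an accumulation of triggers in a bounded time window would --- there being only $n$ neurons --- force some fixed neuron to have inter-event times shrinking to zero. Throughout I use Lemma \ref{Existence}: every trajectory lies in the compact, positively invariant set $\mathcal B$, so I may fix once and for all a bound $M$ with $\|F(x(t))\|\le M$, and --- since $f$ is analytic (hence $C^{2}$) and each $g_{i}$ has bounded derivative --- a constant $C_{0}$ with $\big|\frac{\rm d}{{\rm d}t}[\nabla f(y(t))]_{i}\big|\le C_{0}\|F(x(t))\|$ along trajectories (via $\dot y=\partial g(\Lambda x)\Lambda F(x)$). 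I also take $x(0)\notin\mathcal O$, so that by Theorem \ref{MSTheorem} the continuous triggering situation (P\ref{P1}) is already excluded and the trigger instants form a countable set.

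The argument rests on two quantitative bounds. \emph{Error growth:} on $[t_{k}^{i},t_{k+1}^{i})$ the reference time of neuron $i$ is frozen, so $e_{i}(t_{k}^{i})=0$, $\dot e_{i}(t)=\frac{\rm d}{{\rm d}t}[\nabla f(y(t))]_{i}$, and hence $|e_{i}(t)|\le C_{0}\int_{t_{k}^{i}}^{t}\|F(x(s))\|\,{\rm d}s$. \emph{Threshold lower bound:} since $\sum_{j=1}^{n}{\rm e}^{-2d_{j}(t-t_{k}^{j})}\le n$, the definition of $\Psi_{i}$ together with \eqref{normalization} gives at once $\Psi_{i}(t)\ge\frac{1}{\sqrt n}\,{\rm e}^{-d_{i}(t-t_{k}^{i})}\|F(x(t))\|$. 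Two further facts about $\|F(x(t))\|$ will be used: between consecutive triggers of a neuron $j$ the feedback $[\nabla f(y(t_{k}^{j}))]_{j}$ is constant, so $\frac{\rm d}{{\rm d}t}F_{j}(x(t))=-d_{j}F_{j}(x(t))$ and therefore $\frac{\rm d}{{\rm d}t}\ln\|F(x(t))\|^{2}\in[-2d_{\max},-2d_{\min}]$ on trigger-free subintervals; and at a trigger of neuron $j$ the component $F_{j}$ jumps by $-e_{j}(t^{-})$ with $|e_{j}(t^{-})|\le\gamma\|F(x(t^{-}))\|$ (because $\Psi_{j}\le\|F\|$), so $\|F\|$ is multiplied at a trigger by a factor in a fixed interval $[\underline c,\overline c]$ with $0<\underline c\le 1\le\overline c$ depending only on $\gamma$.

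Combining these, a trigger of neuron $i$ at $t=t_{k+1}^{i}$ forces $C_{0}\int_{t_{k}^{i}}^{t}\|F(x(s))\|\,{\rm d}s\ge\frac{\gamma}{\sqrt n}{\rm e}^{-d_{i}(t-t_{k}^{i})}\|F(x(t))\|$. When no other neuron fires in $(t_{k}^{i},t)$, the first fact gives $\|F(x(s))\|\le{\rm e}^{d_{\max}(t-s)}\|F(x(t))\|$ there, so the left side is at most $\frac{C_{0}}{d_{\max}}\|F(x(t))\|({\rm e}^{d_{\max}(t-t_{k}^{i})}-1)$, and cancelling $\|F(x(t))\|$ leaves $\frac{C_{0}}{d_{\max}}({\rm e}^{d_{\max}(t-t_{k}^{i})}-1)\ge\frac{\gamma}{\sqrt n}{\rm e}^{-d_{i}(t-t_{k}^{i})}$, whose left side tends to $0$ and right side to $\gamma/\sqrt n>0$ as $t-t_{k}^{i}\to 0$; this already yields a uniform lower bound in the ``quiet'' case. \textbf{The main obstacle is precisely the alternate-triggering case (P\ref{P2}): when neighbours of $v_{i}$ do fire inside $(t_{k}^{i},t)$ they perturb $\|F\|$, hence the threshold, by the bounded factors above, so one cannot argue neuron by neuron and must control all neurons simultaneously.} I would close this with a self-consistency (fixed-point) argument: set $\tau^{\ast}=\inf_{i,k}(t_{k+1}^{i}-t_{k}^{i})$; if $\tau^{\ast}>0$, then in any interval realizing a value near $\tau^{\ast}$ the number $p$ of interior triggers is at most $2n$ (each neuron fires at most $\lceil(t_{k+1}^{i}-t_{k}^{i})/\tau^{\ast}\rceil\le 2$ times there), and repeating the race estimate with $p\le 2n$ jump factors produces $\tau^{\ast}{\rm e}^{d_{\max}\tau^{\ast}}\ge\frac{\gamma}{C_{0}\sqrt n}(\underline c/\overline c)^{2n}>0$, forcing $\tau^{\ast}$ to exceed the positive root of this relation; the remaining loose end --- excluding $\tau^{\ast}=0$, i.e.\ a genuine accumulation at some finite $T^{\ast}$ --- is handled by applying the same estimate along a neuron that fires infinitely often near $T^{\ast}$: it forces $\Psi_{i}(t_{k+1}^{i,-})\to 0$, hence by the threshold bound $\|F(x(t))\|\to0$ as $t\uparrow T^{\ast}$, hence (using continuity of $x(t)$, Proposition \ref{Proposition5}) $x(T^{\ast})$ is a rest point of the frozen dynamics, which is incompatible with infinitely many triggers accumulating there. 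Once $\tau^{\ast}>0$ is established, the common positive lower bound and the exclusion of the Zeno alternate-triggering follow.
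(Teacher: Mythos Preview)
The paper's route is much shorter and bypasses the self-consistency argument entirely. The point you miss is the algebraic role of $\delta(t)$: since $\|F(x(t))\|=\delta(t)\sqrt{\sum_{j}e^{-2d_{j}(t-t_{k}^{j})}}\le\sqrt n\,\delta(t)$, the chain rule gives $|\dot e_{i}(t)|\le\|\nabla^{2}f(y(t))\|\,\|\Lambda\,\partial g(\Lambda x(t))\|\,\|F(x(t))\|\le\mathcal M\|\Lambda\|\sqrt n\,\delta(t)$ with $\mathcal M=\sup_{t}\|\nabla^{2}f(y(t))\|$; meanwhile the threshold is $\gamma\Psi_{i}(t)=\gamma\,\delta(t)\,e^{-d_{i}(t-t_{k}^{i})}$. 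Both sides of the trigger inequality therefore carry the \emph{same} factor $\delta(t)$, which the paper cancels (this is precisely where $x(0)\notin\mathcal O$, i.e.\ $\delta\ne0$, is used), leaving the purely scalar relation $\gamma e^{-d_{i}\eta_{i}}\le\mathcal M\|\Lambda\|\sqrt n\,\eta_{i}$ in $\eta_{i}=t_{k+1}^{i}-t_{k}^{i}$. Its positive root furnishes the common lower bound~\eqref{LowBound}; other neurons' triggers never enter the estimate, so there is no ``quiet vs.\ noisy'' dichotomy and no jump-factor bookkeeping. Your bounds $|e_{i}|\le C_{0}\int\|F\|$ and $\Psi_{i}\ge n^{-1/2}e^{-d_{i}(\cdot)}\|F\|$ are correct but discard this structure by comparing $\int_{t_{k}^{i}}^{t}\|F(x(s))\|\,{\rm d}s$ against $\|F(x(t))\|$, which is exactly what forces you to track $\|F\|$ through neighbours' jumps.

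Your fixed-point argument also has a genuine gap in the $\tau^{\ast}=0$ branch. You equate $\tau^{\ast}=0$ with ``a genuine accumulation at some finite $T^{\ast}$,'' but $\inf_{i,k}(t_{k+1}^{i}-t_{k}^{i})=0$ does not force any trigger sequence to accumulate in finite time: one neuron can have inter-event gaps tending to zero along a sparse subsequence while all trigger times march to $+\infty$, and then your accumulation argument does not apply. Even granting a finite $T^{\ast}$, the closing contradiction is not established: $\|F(x(t))\|\to0$ along a subsequence does not make $x(T^{\ast})$ an equilibrium of the autonomous system (recall $F_{i}$ depends on the frozen samples $y(t_{k}^{j})$, not only on $x(t)$), and you do not explain why a rest point of the \emph{frozen} dynamics would be incompatible with further triggers --- both $|e_{i}|$ and $\gamma\Psi_{i}$ vanish there, so the race is $0$ against $0$. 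The $\tau^{\ast}>0$ branch of your bootstrap is essentially sound (with $p\le 2n$ and the jump factors $[\underline c,\overline c]=[1-\gamma,1+\gamma]$), but without closing the $\tau^{\ast}=0$ branch the argument does not deliver the common positive lower bound the theorem claims.
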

\begin{proof}
Let us consider the following derivative of the state measurement error for neuron $v_{i}~(i=\oneton)$
\begin{align*}
  \big|\dot{e}_{i}(t)\big|
&=\Bigg|\sum_{j=1}^{n}\Big[\nabla^2 f\big(y(t)\big)\Big]_{ij}\,\dot{y_{j}}(t)\Bigg|\\[2pt]
&=\Bigg|\sum_{j=1}^{n}\Big[\nabla^2 f\big(y(t)\big)\Big]_{ij}
  \lambda_{j}g'_{j}\big(\lambda_{j}x_{j}(t)\big)F_{j}\big(x(t)\big)\Bigg|\\
&\leqslant
  \Big\|\nabla^{2} f\big(y(t)\big)\Big\|
  \Big\|\Lambda\,\partial g\big(\Lambda x(t)\big)\Big\|
  \sqrt{\sum_{j=1}^{n}\Big|F_{j}\big(x(t)\big)\Big|^{2}}\\
&=\Big\|\nabla^{2} f\big(y(t)\big)\Big\|
  \Big\|\Lambda\,\partial g\big(\Lambda x(t)\big)\Big\|\delta(t)
  \sqrt{\sum_{j=1}^{n}{\rm e}^{-2d_{j}(t-t_{k}^{j})}}\\[2pt]
&\leqslant\mathcal M \big\|\Lambda\big\| \sqrt{n}\,\delta(t),
\end{align*}
where
\begin{align*}
\mathcal M=\max_{t\in[0,+\infty)}\Big\|\nabla^{2} f\big(y(t)\big)\Big\|
\end{align*}
then it follows
\begin{align*}
\bigg|\int_{t_{k}^{i}}^{t}\dot{e}_{i}(s){\rm d}s\bigg|
\leqslant\int_{t_{k}^{i}}^{t}\big|\dot{e}_{i}(s)\big|{\rm d}s
\leqslant \delta(t)\mathcal M \big\|\Lambda\big\|\sqrt{n}\,(t-t_{k}^{i}).
\end{align*}
Based on the distributed event-triggered rule \eqref{PrimaryRule1}, the event will not trigger until $|e_{i}(t)|=\gamma\Psi_{i}(t)$ at time point $t=t_{k+1}^{i}>t_{k}^{i}$. Thus, for each $x(0)\in\R^{n}\backslash\mathcal O$, it holds
\begin{align*}
\gamma\,\delta(t)\,e^{-d_{i}(t_{k+1}^{i}-t_{k}^{i})}
=\big|e_{i}(t)\big|
\leqslant \delta(t)\mathcal M \big\|\Lambda\big\|\sqrt{n}\,\big(t_{k+1}^{i}-t_{k}^{i}\big)
\end{align*}
namely,
\begin{align*}
\frac{\gamma\,e^{-d_{i}\eta_{i}}}{\mathcal M\|\Lambda\|\sqrt{n}}=\eta_{i}
\end{align*}
with $\eta_{i}=t_{k+1}^{i}-t_{k}^{i}$, which possesses a positive solution. Hence, for all the neuron $v_{i}~(i=\oneton)$, the next inter-event time has a common positive lower-bound which follows
\begin{align}\label{LowBound}
\eta=\min_{i\in\{\oneton\}}\bigg\{\eta_{i}:\frac{\gamma\,e^{-d_{i}\eta_{i}}}{\mathcal M\|\Lambda\|\sqrt{n}}=\eta_{i}\bigg\}.
\end{align}

It can be seen that $\eta$ has no concern with all the neurons' states $x_{i}(t)~(i=\oneton)$. Thus, there exists a common positive lower-bound, which is a constant, for the next inter-event interval of each neuron. That is to say, the next triggering time point $t_{k+1}^{i}$ satisfies $t_{k+1}^{i}\geqslant t_{k}^{i}+\eta$ for all $i=\oneton$ and $k=\zerotoinfty$, hence the absence of the {\it alternate triggering situation} in the Zeno behavior (P\ref{P2}) is proved.
\end{proof}

To sum up, we have excluded both the {\it continuous triggering situation} and {\it alternate triggering situation} in the Zeno behavior, when the distributed event-triggered rule is taken into account. Therefore, we can assert that there is no Zeno behavior for all the neurons.

\section{Discrete-time monitoring}\label{Monitoring}
The  continuous monitoring strategy for Theorem \ref{PrimaryRule} may be costly since the state of the system should be observed simultaneously. An alternative method is to predict the triggering time point when inequality \eqref{PrimaryRule1} does not hold and update the triggering time accordingly.

For any neuron $v_{i}~(i=\oneton)$, according to the current event timing $t_{k}^{i}$, its state can be formulated as
\begin{flalign}\label{StateFormula}\raisetag{40pt}
\begin{cases}
x_{i}(t)=x_{i}(t_{k}^{\bm*})+\dfrac{1}{\large d_{i}}
         \bigg\{d_{i}x_{i}(t_{k}^{\bm*})+\Big[\nabla f\big(y(t_{k}^{i})\big)\Big]_{i}-\theta_{i}\bigg\}\\[10pt]
\hspace{7.5ex}\times\Big[e^{-d_{i}(t-t_{k}^{\bm*})}-1\Big]\\[7pt]
y_{i}(t)=g_{i}\big(\lambda_ix_i(t)\big)\\[3pt]
\end{cases}&&
\end{flalign}
for $t_{k}^{\bm*}<t<t_{k+1}^{i}$, where $t_{k}^{\bm*}$ is the newest timing of all $v_{i}$'s in-neighbours, that is
\begin{align*}
t_{k}^{\bm*}=\max_{j\in N_{i}^{\text{in}}}\big\{t_{k}^{j}\big\}
\end{align*}
and $t_{k+1}^{i}$ is the next triggering time point at which neuron $v_{i}$ happens the triggering event. Then, solving the following maximization problem
\begin{align}\label{Maximization}
\Delta t_{k}^{i}
=\max_{t\in(t_{k}^{\bm*},t_{k+1}^{i})}\Big\{t-t_{k}^{\bm*}:\big|e_{i}(t)\big|\leqslant\gamma\Psi_{i}(t)\Big\},
\end{align}
we have the following prediction algorithm (Algorithm \ref{Algorithm}) for the next triggering time point.

With the information of each neuron at time $t^{i}_{k}$ and the proper parameters $\gamma$, search the observation time $\Delta t_{k}^{i}$ by \eqref{Maximization} at first. If no triggering events occur in all $v_{i}$'s in-neighbours during $(t_{k}^{i},t_{k}^{\bm*}+\Delta t_{k}^{i})$, the neuron $v_{i}$ triggers at time $t_{k}^{\bm*}+\Delta t_{k}^{i}$ and record as the next triggering event time $t_{k+1}^{i}$, that is $t_{k+1}^{i}=t_{k}^{\bm*}+\Delta t_{k}^{i}$. Renew the neuron $v_{i}$'s state and send the renewed information to all its out-neighbours. The prediction of neuron $v_{i}$ is finished. If some in-neighbours of $v_{i}$ triggers at time $t\in(t_{k}^{i},t_{k}^{\bm*}+\Delta t_{k}^{i})$, update $t_{k}^{\bm*}$ in state formula \eqref{StateFormula} and go back to find a new observation time $t_{k}^{\bm*}+\Delta t_{k}^{i}$ by solving the maximization problem \eqref{Maximization}.

\begin{algorithm}[h]
\caption{Prediction for the next triggering time point $t_{k+1}^{i}$}
\label{Algorithm}
$~~//$ $N_{i}^{\text{in}}$ is the set of neuron $v_{i}$'s in-neighbours\hspace*{\fill}
\begin{algorithmic}[1]
\Require
\State Initialize $\gamma>0$
\State $t_{k}^{\bm*}\gets \max_{j\in N_{i}^{\text{in}}}t_{k}^{j}$
\State $x_{i}(t)\gets x_{i}(t_{k}^{i})$ for all $i=\oneton$ 
\Ensure
\State $\text{Flag}\gets 0$
\While {Flag $=0$} 
\State Search $\Delta t_{k}^{i}$ by the strategy \eqref{Maximization}
\State $\tau\gets t_{k}^{\bm*}+\Delta t_{k}^{i}$
\If{No in-neighbours of $v_{i}$ trigger during $(t_{k}^{i},\tau)$}
\State $v_{i}$ triggers at time $t_{k+1}^{i}=\tau$
\State $v_{i}$ renew its state information $x_{i}(t_{k+1}^{i})$
\State $v_{i}$ sends the state information to its out-neighbours
\State $\text{Flag}\gets1$ %
\Else
\State Update $t_{k}^{\bm*}$ in the state formula \eqref{StateFormula}
\EndIf
\EndWhile
\State\Return{$t_{k+1}^{i}$}
\end{algorithmic}
\end{algorithm}

In addition, when neuron $v_{i}$ updates its observation time $\Delta t_{k}^{i}$, the triggering time predictions of $v_{i}$'s out-neighbours will be affected. Therefore, besides the state formula \eqref{StateFormula} and the maximization problem \eqref{Maximization} as given before, each neuron should take their triggering event time whenever any of its in-neighbours renews and broadcasts its state information. In other word, if one neuron updates its triggering event time, it is mandatory to inform all its out-neighbours.

\begin{remark}
The discrete-time monitoring by using the state formula \eqref{StateFormula} may lose the high-level efficiency of the convergence, because it abandons the continuous adjustment on $\delta(t)$ as defined in Eq. \eqref{normalization}. But the advantage is that a discrete-time inspection on $x(t)$ can be introduced to ensure the convergence in Theorem \ref{PrimaryRule}. This can reduce the monitoring cost.
\end{remark}

\section{Examples}\label{sec5}
In this section, two numerical examples are given to demonstrate the effectiveness of the presented results and the application.

\noindent{\bf Example 1:} Considering a 2-dimension analytic neural network with
\begin{align*}
f(y)=-\frac{1}{2}y^{\top}Wy,
\end{align*}
where
\begin{align*}
D=\Lambda=I_2,~
\theta=\begin{bmatrix} 1\\-1 \end{bmatrix},~
W=\begin{bmatrix} 2 & -1\\ -1 & 2 \end{bmatrix}.
\end{align*}
We have
\begin{align*}
\nabla f(y)=-Wy
\end{align*}
and adopt the distributed event-triggered rule (Theorem \ref{PrimaryRule}). The initial value of each neuron is randomly selected in the interval $[-1,1]$. Figure \ref{fig:1} shows that the state $x(t)$ converges to $\nu=[2.7072,-1.6021]^{\top}$  with the initial value $x(0)=[-0.6703, 0.6304]^{\top}$  by taking $\gamma=0.5$.

\begin{figure}[h]
\centerline{
\includegraphics[width=0.48\textwidth]{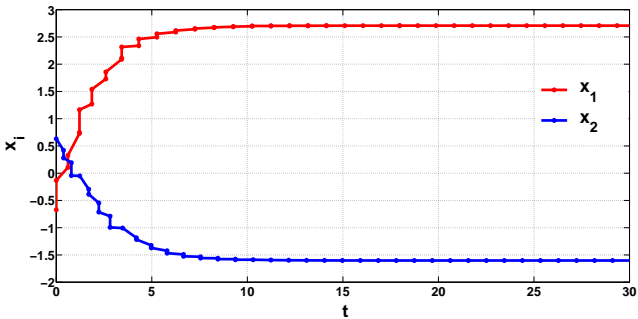}}
\caption{The state $x(t)$ of two neurons converges to $\nu=[2.7072,-1.6021]^{\top}$ with $x(0)=[-0.6703, 0.6304]^{\top}$.}
\label{fig:1}
\end{figure}

Take the different values of the parameter $\gamma$ under the distributed event-triggered rule, the simulation results are shown in Table \ref{Table1}. In this table, $\eta$ is the theoretical lower-bound for the inter-event time of all the neurons calculated by \eqref{LowBound}. $\Delta t_{\min}=\min_{i\in\{\oneton\}}\min_{k\in\{\zerotoinfty\}}(t_{k+1}^{i}-t_{k}^{i})$ is the
actual calculation value of the minimal length of inter-event time. $N$ is number of triggering times and $T_1$ stands for the first time when $\|x(t)-\nu\|\le 0.0001$, as an index for the convergence rate. All results are drawn by averaging over 50 overlaps. It can be seen that the actual calculation minimal inter-event time $\Delta t_{\min}$ is larger than the corresponding theoretical lower-bound $\eta$. This implies that we have excluded the Zeno behavior with the lower-bound $\eta$ of the inter-event time for all the neurons. Moreover, The actual number of event $N$ decrease while $T_1$ increases with $\gamma$ increasing, which is in agreement with the theoretical results.

\begin{table}[h]
\centering
\renewcommand\arraystretch{1.25}\setlength{\tabcolsep}{10pt}\small
\caption{Simulation results with different $\gamma$ under the distributed event-
triggered rule}
\begin{tabular}{c|c|c|c|c}
\hline\rule{0pt}{2.0ex}
$\gamma$ & $\eta$ & $\Delta t_{\min}$ & $N$ & $T_{1}$ \\\hline
0.1  &0.4676 &0.6072   &42.10   &31.6612  \\\hline
0.2  &0.4914 &0.8920   &26.90   &32.5330  \\\hline
0.3  &0.5378 &1.0560   &21.16   &32.8354  \\\hline
0.4  &0.5974 &1.1643   &17.92   &33.0597  \\\hline
0.5  &0.6514 &1.2014   &15.58   &33.0533  \\\hline
0.6  &0.7224 &1.2020   &14.48   &33.2765  \\\hline
0.7  &0.7826 &1.2018   &12.70   &33.4677  \\\hline
0.8  &0.8232 &1.2028   &13.22   &33.4744  \\\hline
0.9  &0.8446 &1.2043   &12.76   &33.4854  \\\hline
\end{tabular}\label{Table1}
\end{table}

According to the definition of Lyapunov (or energy) function \eqref{Ly}, if the input $\|\theta\|$ takes a sufficient small value and $\lambda_i\rightarrow+\infty$ for $i=1,2$, then $L(x)\approx f(y)$.
Thus, as an application of our results, system \eqref{mg} with the distributed event-triggered rule can be utilised to seek the local minimum point of $f(y)$ over $\{0,1\}^2$. Denote
\begin{align*}
\overline{y}(\Lambda)=\lim_{t\rightarrow+\infty}g\big(\Lambda x(t)\big),
\end{align*}
where $x(t)$ is the trajectory of the system \eqref{mg}. Thus $\overline{y}(\Lambda)$ is the local minimum point of $H(y)$ as
\begin{align*}
H(y)=-\frac{1}{2}y^{\top}Wy,
\end{align*}
Figure \ref{fig:2} shows that the terminal limit $\overline{y}(\Lambda)$ converge to a local minimum points $[1,1]^{\top}$ as $\lambda_i\rightarrow+\infty$ for $i=1,2$.

\begin{figure}[h]
\centerline{
\includegraphics[width=0.49\textwidth]{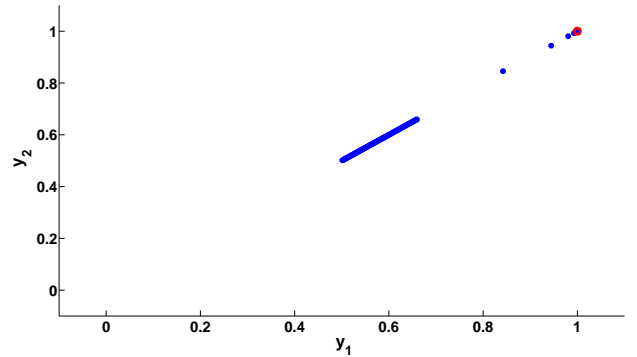}}
\caption{The limit $\overline{y}(\Lambda)$ converges to a local minimum point $[1,1]^{\top}$. We select $\theta=[0.001,0.001]^{\top}$, $\gamma=0.5$, and random initial data in the interval $[-1,1]$. $\lambda_1=\lambda_2$ are selected from 0.01 to 100.}
\label{fig:2}
\end{figure}

\noindent{\bf Example 2:}
Consider a 2-dimension neural network (\ref{mg}) with
\begin{align*}
f(y)=\sum_{i=1}^{2}\bigg(\frac{3}{4}y_{i}^4-y_{i}^3\bigg)-\frac{1}{2}y^{\top}Wy+y
\end{align*}
where
\begin{align*}
D=\Lambda=I_2,~
\theta=\begin{bmatrix} 1\\-1 \end{bmatrix},
W=\begin{bmatrix} 2 & 2\\ 2 & 2 \end{bmatrix}
\end{align*}
We have
\begin{align*}
\nabla f(y)=3\sum_{i=1}^{2}\big(y_{i}^3-y_{i}^2\big)-Wy+1
\end{align*}
and take the distributed event-triggered rule (Theorem \ref{PrimaryRule}). The initial value of each neuron is randomly selected in the interval $[-5,5]$. Figure \ref{fig:3} shows that the state $x(t)$ converges to $\nu=[0.6911,-1.3089]^{\top}$ with taking $\gamma=0.5$.

\begin{figure}[h]
\centerline{\includegraphics[width=0.48\textwidth]{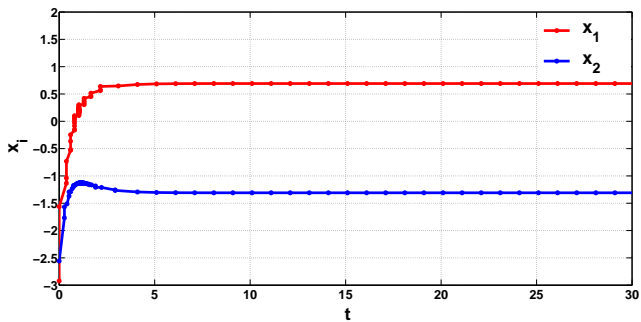}}
\caption{The state $x(t)$ of two neurons converges to $\nu=[0.6911,-1.3089]^{\top}$ with $x(0)=[-2.922, -3.557]^{\top}$.}
\label{fig:3}
\end{figure}

We also calculate the index $\eta$, $\Delta t_{\min}$, $N$ and $T_{1}$ with different values of $\gamma$, as shown in Table \ref{Table2} by averaging over 50 overlaps. The notions are the same as those in Table \ref{Table1}.

\begin{table}[h]
\centering
\renewcommand\arraystretch{1.25}\setlength{\tabcolsep}{10pt}\small
\caption{Simulation results with different $\gamma$ under the distributed event-
triggered rule}
\begin{tabular}{c|c|c|c|c}
\hline\rule{0pt}{2.0ex}
$\gamma$  & $\eta$ & $\Delta t_{\min}$ & $N$ & $T_{1}$ \\\hline
0.1  &0.1540  &0.2892   &20.38   &21.7786  \\\hline
0.2  &0.1639  &0.3052   &19.47   &21.1697  \\\hline
0.3  &0.1716  &0.3308   &18.52   &20.8751  \\\hline
0.4  &0.1854  &0.3654   &17.95   &20.4182  \\\hline
0.5  &0.1983  &0.3939   &18.09   &20.2560  \\\hline
0.6  &0.2014  &0.4213   &17.26   &19.9064   \\\hline
0.7  &0.2154  &0.4582   &16.67   &19.5929   \\\hline
0.8  &0.2279  &0.5106   &16.25   &19.3727   \\\hline
0.9  &0.2348  &0.5352   &16.04   &18.6549   \\\hline
\end{tabular}\label{Table2}
\end{table}

It can also be seen from the Table \ref{Table2} that we have excluded the Zeno behavior with the theoretical lower-bound $\eta$ of the inter-event time smaller than the actual calculation value $\Delta t_{\min}$ under the distributed event-triggered rule. In addition, the actual number of events $N$ decreases while $T_1$ increases with the increasing $\gamma$.

Similar to the first example, if $\|\theta\|$ is sufficiently small and let $\lambda_i\rightarrow+\infty$ for $i=1,2$, it follows $L(x)\approx f(y)$. As an application, we use the distributed event-triggered rule to minimize
\begin{align*}
H(y)=\sum_{i=1}^{2}\bigg(\frac{3}{4}y_{i}^{4}-y_{i}^{3}\bigg)-\frac{1}{2}y^{\top}Wy+y
\end{align*}
over $\{0,1\}^2$. Denote
\begin{align*}
\overline{y}(\Lambda)=\lim_{t\rightarrow+\infty}g\big(\Lambda x(t)\big),
\end{align*}
where $x(t)$ is the trajectory of (\ref{mg}). Then $\overline{y}(\Lambda)$ is the local minimum point of $H(y)$ when $\|\theta\|$ is sufficiently small and $\lambda_i\rightarrow+\infty$. Figure \ref{fig:4} shows that the terminal limit $\overline{y}(\Lambda)$ converges to two local minimum points $[1,0]^{\top}$ and $[0,1]^{\top}$ as $\lambda_i\rightarrow+\infty$ for $i=1,2$.

\begin{figure}[h]
\centerline{
\includegraphics[width=0.49\textwidth]{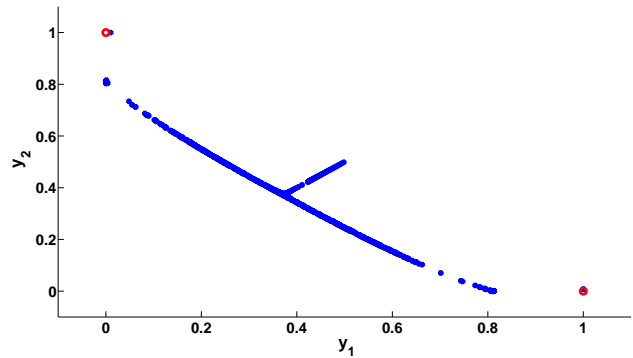}}
\caption{The limit $\overline{y}(\Lambda)$ converges to two local minimum points $[1,0]^{\top}$ and $[0,1]^{\top}$. We select $\theta=[0.001,0.001]^{\top}$, $\gamma=0.5$, and random initial data in the interval $[-5,5]$. $\lambda_1=\lambda_2$ are picked from 0.01 to 100.}
\label{fig:4}
\end{figure}

{
\section{Conclusion}\label{sec6}
In this paper, two triggering rules for discrete-time synaptic feedbacks in a class of analytic neural network {have} been proposed and proved to guarantee neural networks to be completely stable. In addition, the Zeno behaviors can be excluded. By these distributed and asynchronous event-triggering rules, the synaptic information exchanging frequency between neurons are significantly reduced. The main technique of proving complete stability is finite-length of trajectory and the ${\L}$ojasiewicz inequality \cite{Mfa}. Two numerical examples have been provided to demonstrate the effectiveness of the theoretical results. It has also been shown by these examples the application in combinator optimisation, following the routine in \cite{Wll}. Moreever, the proposed approaches can reduce the cost of synaptic interactions between neurons significantly. One step further, our future work will include the self-triggered formulation and event-triggered stability of other more general systems as well as their application in dynamic optimisation.
}

\def\V{\rm vol.~}
\def\N{no.~}
\def\pp{pp.~}
\def\Pot{\it Proc. }
\def\IJCNN{\it International Joint Conference on Neural Networks\rm }
\def\ACC{\it American Control Conference\rm }
\def\SMC{\it IEEE Trans. Systems\rm , \it Man\rm , and \it Cybernetics\rm }

\def\handb{ \it Handbook of Intelligent Control: Neural\rm , \it
    Fuzzy\rm , \it and Adaptive Approaches \rm }

\end{document}